\numberwithin{equation}{section}
\newcommand{\Pb}{\mathbbm{P}}
\newcommand{\E}{\mathbbm{E}}
\newcommand{\Id}{\mathbbm{1}}
\newcommand{\I}{{\rm i}}
\newcommand{\D}{\mathrm{d}}
\newcommand{\R}{\mathbb{R}}
\newcommand{\N}{\mathbb{N}}
\newcommand{\HH}{\mathbb{H}}
\newcommand{\CH}{\mathcal{H}}
\newcommand{\K}{\mathbb{K}}
\newcommand{\Z}{\mathbb{Z}}
\renewcommand{\Im}{\mathrm{Im}}
\newcommand{\WW}{\circ}
\newcommand{\BB}{\bullet}
\newtheorem{prop}{Proposition}[section]
\newtheorem{thm}[prop]{Theorem}
\newtheorem{lem}[prop]{Lemma}
\newtheorem{cor}[prop]{Corollary}
\newtheorem{cla}[prop]{Claim}
\newtheorem{rem}[prop]{Remark}
\title{A combinatorial identity for the speed of growth in an anisotropic KPZ model}
\date{27. January 2016}
\author{Sunil Chhita\thanks{Institute for Applied Mathematics, Bonn University, Endenicher Allee 60, 53115 Bonn, Germany. E-mail: {\tt schhita@iam.uni-bonn.de}}\and Patrik L.\ Ferrari\thanks{Institute for Applied Mathematics, Bonn University, Endenicher Allee 60, 53115 Bonn, Germany. E-mail: {\tt ferrari@uni-bonn.de}}}
\begin{document}

\maketitle
\sloppy
\begin{abstract}
The speed of growth for a particular stochastic growth model introduced by Borodin and Ferrari in~\cite{BF08}, which belongs to the KPZ anisotropic universality class, was computed using multi-time correlations. The model was recently generalized by Toninelli in~\cite{Ton15} and for this generalization the stationary measure is known but the time correlations are unknown. In this note, we obtain algebraic and combinatorial proofs for the expression of the speed of growth from the prescribed dynamics.
\end{abstract}

\newpage
\section{Introduction}\label{SectIntro}
This note considers a stochastic growth model in the \emph{KPZ anisotropic class} in \mbox{$2+1$ dimensions}. This model was introduced in~\cite{BF08} and studied  in depth for a specific initial condition, the case considered here.  This model describes the evolution of particles subject to an \emph{interlacing property}.  The model can also be thought of as a two-dimensional \emph{stochastically growing interface}, or as a \emph{random lozenge tiling model}. Another tiling model which shares similar features of the dynamical perspective is \emph{domino tilings of the Aztec diamond} using \emph{the shuffling algorithm}; see~\cite{Nor08,BF15}. For the model considered in this paper, the evolution of the interface, under hydrodynamic scaling, grows deterministically according to a PDE. At the microscopic level in the bulk, the specified boundary conditions of the system are \emph{forgotten}: in the bulk of the system one sees an invariant measure which depends only on the normal direction of the macroscopic surface~\cite{BF08}. From the lozenge tiling perspective, these limiting measures are determinantal and they are parameterized by the relative proportions of lozenges. These measures match up with the dimer model on the infinite honeycomb graph which are the unique translation invariant stationary measures for any given normal direction~\cite{KOS03,Shef05}.

Stochastic growth models have been studied in many different guises. Many of these studies have focussed around the 1+1 KPZ universality class; see for example~\cite{Cor11,Q12,QS15,Fer10b,SS10c} for surveys. The $d=2$ anisotropic case has not been as extensively studied~\cite{Wol91,PS97,BF08,Ku11} but there are, however, numerous results in connection with random tiling models~\cite{BF15,Nor08,BG08,Du13,Pet12,Pet12b,Ken04}. One feature of these models which is of particular interest for this note is the \emph{limit shape}~\cite{CKP01,KO07}. This is the average profile which the stochastic interface fluctuates around.  In particular, we focus on giving an elementary approach to computing the \emph{speed of growth}, which is the growth rate of the stochastic growing interface under the prescribed dynamics. This is an important quantity since it determines the limit shape of the system.

In the work by Borodin and Ferrari~\cite{BF08}, the speed of growth was obtained by computing the \emph{infinitesimal current} and then taking the bulk scaling limit. The computation was relatively straightforward, but requires the knowledge of the correlations of particles or lozenges at two different times.   Recently Toninelli in~\cite{Ton15}, considered the same model and a generalized version of the dynamics for the infinite honeycomb graph using a \emph{bead} perspective; beads and the \emph{bead model} were introduced in~\cite{Bou09}. He shows that the model is well-defined for the stationary measure. This requires some effort since the dynamics allow, a-priori, infinitely long-range interactions, but under the stationary measure these interaction probabilities decay exponentially with the distance. The speed of growth was not determined in~\cite{Ton15}.

By appealing to the underlying combinatorics of the model and the Kasteleyn approach for dimer models (e.g. see~\cite{KenLectures}), we are able to determine the speed of growth in the infinite honeycomb case, thus establishing the conjecture in~\cite[Eq. (3.6)]{Ton15}\footnote{ArXiv version 1}; see Theorem~\ref{ThmInfiniteSystem}. From~\cite{BF08}, the speed of growth $v$ is given by a specific (unsigned) off-diagonal entry of the inverse of the Kasteleyn matrix, where the Kasteleyn matrix is a type of (possibly signed) adjacency matrix~\cite{Kas61}. The approach used is to first find a recursive formula for this particular entry for a particular subgraph of the honeycomb graph and then to extend this subgraph to the infinite plane using known results in the literature. The resulting limiting series matches with the speed of growth defined through the dynamics under the stationary measure. To prove Theorem~\ref{ThmInfiniteSystem}, we found a relation for the specific off-diagonal entry of the inverse of the Kasteleyn matrix for any \emph{tileable} finite honeycomb graph with arbitrary edge weights in terms of single times. As a consequence, we use this relation to motivate an algebraic proof that the expression for the speed of growth of the model in~\cite{BF08} and the speed of growth computed from the prescribed dynamics are the same (the latter is a series with entries given by determintants of increasing size). In principle it seems to be feasible to obtain Theorem~\ref{ThmInfiniteSystem} from Theorem~\ref{ThmFiniteSystem} by a precise asymptotic analysis and a careful manipulation of sums, but we did not pursue this since we are primarily interested in understanding combinatorial structures behind the identity.

\bigskip\noindent
{\bf Acknowledgments.} The authors are grateful for discussions with F. Toninelli about his work and to both ICERM and the Galileo Galileo Institute which provided the platform to make such discussions possible. The work is supported by the German Research Foundation via the SFB 1060--B04 project.

\section{Model and results}\label{SectModel}

\subsection{The finite particle model}\label{sectFiniteT}
We first describe the interacting particle system model introduced in~\cite{BF08}. It is a model in the $2+1$-dimensional KPZ anisotropic class. It is a continuous time Markov chain on the state space of interlacing variables
\begin{equation}
{\rm GT}_N=\Big\{\{x_k^m\}_{\begin{subarray}{ll} k=1,\dots,m\\m=1,\dots,N\end{subarray}}
\subset \Z^{\frac{N(N+1)}2}\mid x^m_{k-1}<x_{k-1}^{m-1}\leq x_k^m\Big\}.
\end{equation}
$x_k^m$ is interpreted as the position of the particle with label $(k,m)$, but we will also refer to a given particle as $x_k^m$. We consider fully-packed initial conditions, namely at time moment $t=0$ we have $x_k^m(0)=k-m-1$ for all $k,m$; see Figure~\ref{FigureInitialConditions}.
\begin{figure}
\begin{center}
\psfrag{x}[b]{$x$}
\psfrag{n}[b]{$n$}
\psfrag{h}[b]{$h$}
\psfrag{x11}[cl]{$x_1^1$}
\psfrag{x12}[cl]{$x_1^2$}
\psfrag{x13}[cl]{$x_1^3$}
\psfrag{x14}[cl]{$x_1^4$}
\psfrag{x15}[cl]{$x_1^5$}
\psfrag{x22}[cl]{$x_2^2$}
\psfrag{x23}[cl]{$x_2^3$}
\psfrag{x24}[cl]{$x_2^4$}
\psfrag{x25}[cl]{$x_2^5$}
\psfrag{x33}[cl]{$x_3^3$}
\psfrag{x34}[cl]{$x_3^4$}
\psfrag{x35}[cl]{$x_3^5$}
\psfrag{x44}[cl]{$x_4^4$}
\psfrag{x45}[cl]{$x_4^5$}
\psfrag{x55}[cl]{$x_5^5$}
\includegraphics[height=4.3cm]{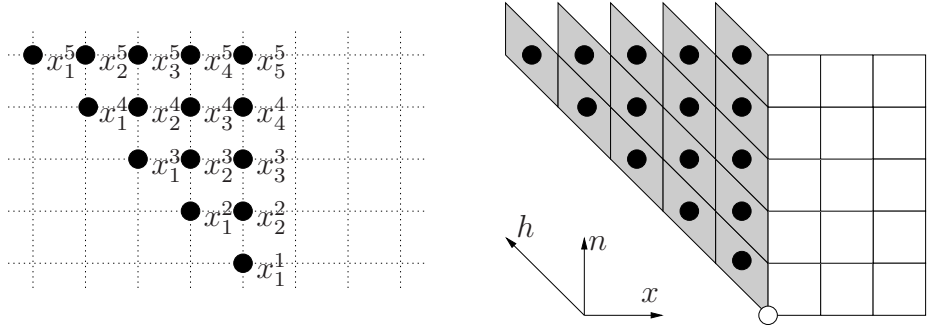}
\caption{Illustration of the initial conditions for the particles system and the corresponding lozenge tilings. In the height function picture, the white circle has coordinates \mbox{$(x,n,h)=(-1/2,-1/2,0)$}.}
\label{FigureInitialConditions}
\end{center}
\end{figure}

The particles evolve according to the following dynamics. Each particle $x_k^m$ has an independent exponential clock of rate one, and when the $x_k^m$-clock rings the particle attempts to jump to the right by one. If at that moment \mbox{$x_k^m=x_k^{m-1}-1$} then the jump is blocked. Otherwise, we find the largest $c\geq 1$ such that \mbox{$x_k^m=x_{k+1}^{m+1}=\dots=x_{k+c-1}^{m+c-1}$}, and all $c$ particles in this string jump to the right by one.

We illustrate the dynamics using Figure~\ref{FigureIntro},
which shows a possible configuration of particles obtained from the fully-packed
initial condition. In this state of the system, if the $x_1^3$-clock
rings, then the particle $x_1^3$ does not move, because it is blocked by
particle $x_1^2$. If the $x_2^2$-clock rings, then
the particle $x_2^2$ moves to the right by one unit, but to respect the interlacing property, the particles $x_3^3$ and $x_4^4$ also
move by one unit to the right at the same time. This aspect of the dynamics is
called \emph{pushing}.
\begin{figure}
\begin{center}
\psfrag{x}[b]{$x$}
\psfrag{n}[b]{$n$}
\psfrag{h}[b]{$h$}
\psfrag{x11}[cl]{$x_1^1$}
\psfrag{x12}[cl]{$x_1^2$}
\psfrag{x13}[cl]{$x_1^3$}
\psfrag{x14}[cl]{$x_1^4$}
\psfrag{x15}[cl]{$x_1^5$}
\psfrag{x22}[cl]{$x_2^2$}
\psfrag{x23}[cl]{$x_2^3$}
\psfrag{x24}[cl]{$x_2^4$}
\psfrag{x25}[cl]{$x_2^5$}
\psfrag{x33}[cl]{$x_3^3$}
\psfrag{x34}[cl]{$x_3^4$}
\psfrag{x35}[cl]{$x_3^5$}
\psfrag{x44}[cl]{$x_4^4$}
\psfrag{x45}[cl]{$x_4^5$}
\psfrag{x55}[cl]{$x_5^5$}
\includegraphics[height=4.3cm]{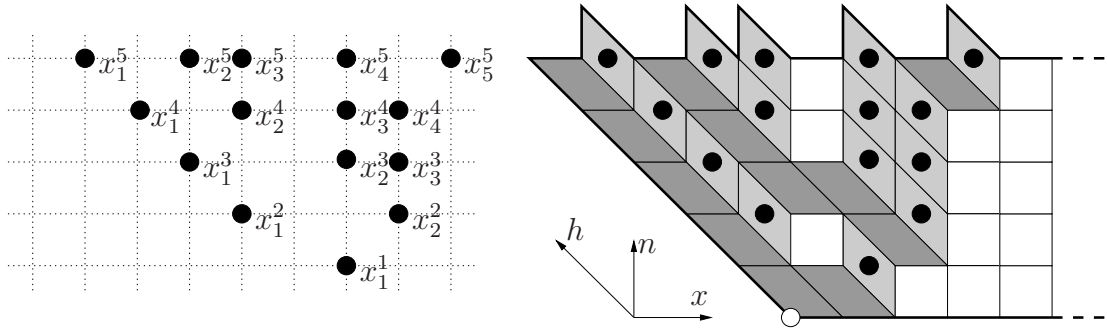}
\caption{From particle configurations (left) to 3d visualization via lozenge tilings (right).}
\label{FigureIntro}
\end{center}
\end{figure}

\begin{rem}\label{remLozenge}
The positions of the particles uniquely determine a lozenge tiling in the region bordered by the thick line in Figure~\ref{FigureIntro}.
\end{rem}

\begin{rem}\label{remMeasureUnif}
As shown in~\cite{BF08}, the measure at time $t$ generated by the dynamics starting from the fully-packed initial condition has the property that, conditioned on the measure of the particles at level $N$, the other particles are uniformly distributed on ${\rm GT}_N$ with fixed level $N$ configuration.
\end{rem}

Further, it is proved in Theorem~1.1 of~\cite{BF08} that the correlation function of the particles are determinantal on a subset of space-time~\cite{BF07} (see~\cite{Lyo03,BKPV05,Sos06,Jo05,Spo05,Bor09,RB04} for information on determinantal point processes). Denote by $\eta(x,n,t)$ the random variable that is equal to $1$ if there is a particle at $(x,n)$ at time $t$ and $0$ otherwise.

\begin{thm}[Theorem~1.1 of~\cite{BF08}]\label{ThmDetStructure}
For any given \mbox{$m\in \N$}, consider \mbox{$t_1\leq t_2\leq \dots\leq t_m$}, and $n_1\geq n_2\geq \dots\geq n_m$. Then
\begin{equation}
\Pb\left[\cap_{i=1}^m\{\eta(x_i,n_i,t_i)=1\}\right]=\det{[{\cal K}(x_i,n_i,t_i;x_j,n_j,t_j)]}_{i,j=1}^m,
\end{equation}
where
\begin{multline}\label{StartingKernel}
{\cal K}(x_1,n_1,t_1;x_2,n_2,t_2) =-\frac{1}{2\pi\I}\oint_{\Gamma_0}
\frac{\D w}{w^{x_2-x_1+1}}\,\frac{e^{(t_1-t_2)/w}}
{(1-w)^{n_2-n_1}}\,\Id_{[(n_1,t_1)\prec (n_2,t_2)]} \\
+\frac{1}{(2\pi i)^2}\oint_{\Gamma_0}\D w\oint_{\Gamma_1}\D z
\frac{e^{t_1/w}}{e^{t_2/z}} \frac{(1-w)^{n_1}}{(1-z)^{n_2}}
\frac{w^{x_1}}{z^{x_2+1}}\,\frac{1}{w-z},
\end{multline}
the contours $\Gamma_0$, $\Gamma_{1}$ are simple positively oriented
closed paths that include the poles $0$ and $1$, respectively, and
no other poles (hence, they are disjoint).
Here we used the notation
\begin{equation}\label{eqPartialOrder}
(n_1,t_1)\prec (n_2,t_2) \quad\text{iff} \quad n_1\leq n_2,
t_1\geq t_2,\text{ and }(n_1,t_1)\neq (n_2,t_2).
\end{equation}
\end{thm}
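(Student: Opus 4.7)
The plan is to exploit a standard strategy for TASEP-like dynamics on Gelfand--Tsetlin patterns: reduce the multi-time, multi-level correlations to those of a determinantal point process built from non-intersecting line ensembles, and then extract the explicit kernel by biorthogonalization. Two structural facts drive the argument. First, by the construction in~\cite{BF08}, the dynamics on ${\rm GT}_N$ is an intertwining ``Markov function'' lift of a one-dimensional chain: on each level $n$ the particles $x_1^n,\dots,x_n^n$ evolve as a continuous-time Markov chain of right-jumps, and, when that one-level dynamics is extended to the full array using the uniform conditional measure of Remark~\ref{remMeasureUnif}, the prescribed blocking/pushing rules emerge automatically. Second, starting from fully-packed initial conditions, the joint distribution on ${\rm GT}_N$ at any fixed $t$ is a Schur-type measure, or equivalently the law of a family of non-intersecting continuous-time random walks with Poisson jumps started at consecutive integers.

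I would then proceed in three steps. First I would establish the single-time determinantal structure by applying the Karlin--McGregor formula to the non-intersecting walk representation and reading off a correlation kernel via Cauchy--Binet, as for the classical Schur process. Next I would extend this to space-like configurations, i.e.\ tuples $(x_i,n_i,t_i)$ with $(n_1,t_1)\prec\dots\prec(n_m,t_m)$, by writing explicit transition kernels across space-like increments $(n,t)\mapsto(n',t')$ and applying an Eynard--Mehta type identity to compose them with the single-time kernel. Finally I would perform the biorthogonalization to identify the contour-integral formula: the single contour integral in \eqref{StartingKernel} encodes the space-like transition along the partial order $\prec$, the factor $e^{(t_1-t_2)/w}(1-w)^{-(n_2-n_1)}$ being the generating function of the Poisson-plus-binomial increment kernel, while the double contour integral encodes the Schur-measure correlation kernel arising from the fully-packed initial condition, the factors $e^{t_i/w}$, $(1-w)^{n_i}$, $w^{x_i}$ matching respectively the Poisson generator, the level shifts, and the spatial Fourier weights.

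The main obstacle is the second step: the required determinantal statement lives on \emph{space-like} surfaces mixing time and level orderings, and so falls outside the textbook Eynard--Mehta framework which applies to a sequence of measures at successive time slices. The cleanest way through is to verify a single intertwining identity relating the joint law along a space-like sequence to an auxiliary multi-level non-intersecting ensemble, and then to invoke a generalized Eynard--Mehta lemma in this broader setting. Once this is established, identifying the kernel \eqref{StartingKernel} becomes a routine residue computation: the indicator $\Id_{[(n_1,t_1)\prec(n_2,t_2)]}$ merely records whether the transition piece contributes, and the remaining double contour integral is the standard correlation kernel of the Schur measure, read off directly from the biorthogonal functions associated with the Poisson transition semigroup and the staircase initial condition.
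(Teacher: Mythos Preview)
The paper does not contain a proof of this statement at all: Theorem~\ref{ThmDetStructure} is explicitly labeled ``Theorem~1.1 of~\cite{BF08}'' and is simply quoted as a known input, with no argument given. So there is nothing in the present paper to compare your proposal against.

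That said, your outline is a faithful high-level summary of the strategy actually used in~\cite{BF08} (and in the closely related~\cite{BF07}): one first checks that the dynamics on ${\rm GT}_N$ with packed initial data produces, along any space-like sequence $(n_1,t_1)\succ\cdots\succ(n_m,t_m)$, a measure of the conditional-$L$-ensemble type to which a generalized Eynard--Mehta theorem applies, and then one computes the biorthogonal functions explicitly via contour integrals. You have correctly identified the one nontrivial structural step, namely that the Eynard--Mehta mechanism must be run along space-like paths mixing level and time rather than along pure time slices; this is exactly the content of the ``space-like paths'' framework of~\cite{BF07}. Your proposal is not a proof but a plan, and the plan is the right one; to turn it into a proof you would need to actually write down the transition kernels $\phi_{(n,t)\to(n',t')}$, verify the intertwining/compatibility with the Gibbs (uniform on GT) conditioning, and carry out the residue computation that produces~\eqref{StartingKernel}. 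None of that appears here, but since the present paper does not attempt it either, the omission is not a discrepancy with the paper.
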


Given a configuration of particles, we define the height $h(x,n,t)$ as
\begin{equation}\label{eqDefinHeight}
h(x,n,t)=\#\{k\in\{1,\dots,n\}\mid x_k^n(t)>x\}.
\end{equation}
In particular, the growth rate of the height at a position $(x,n)$ is given by the (infinitesimal) particle current at $(x,n)$, denoted by $j(x,n,t)$ with
\begin{equation}\label{eq2.6}
j(x,n,t)=\lim_{\epsilon\to 0}\epsilon^{-1}\E\left[\eta(x,n,t)(1-\eta(x,n,t+\epsilon)\right].
\end{equation}

This quantity was computed in the proof of Lemma~5.3 of~\cite{BF08} with the result
\begin{equation}
j(x,n,t)={\cal K}_t(x,n;x+1,n),
\end{equation}
where here we use the notation ${\cal K}_t(x,n;y,m):={\cal K}(x,n,t;y,m,t)$.

On the other hand, at any time $t$, the height function $h$ at $(x,n)$ increases by one whenever there is a particle at position $(x,n)$ which jumps to \mbox{$(x+1,n)$}. By definition of the dynamics,  at rate 1 a particle at $(x,n)$ could jump to the right provided that there is no particle at $(x+1,n-1)$. However, a particle at $(x,n)$ can also be pushed if there is a column of particles directly below $(x,n)$. More explicitly, for $\ell>0$ a particle at $(x,n)$ is pushed to the right by the move of a particle at $(x,n-\ell+1)$. This happens at rate $1$ provided that $(x+1,n-\ell)$ is empty and $(x,n-k)$ for $k=0,\ldots,\ell-1$ are occupied. In the case $\ell=n$, the constraint that $(x+1,0)$ is empty does not exist as there are no particles at level $0$. Therefore the growth rate obtained from the dynamics, denoted by $v(x,n,t)$, is given by
\begin{equation}\label{eqspeed}
v(x,n,t):=\sum_{\ell=1}^{n-1} \E\left[(1-\eta(x+1,n-\ell,t))\prod_{k=0}^{\ell-1} \eta(x,n-k,t)\right]+\E\left[\prod_{k=0}^{n-1} \eta(x,n-k,t)\right].
\end{equation}

Using Theorem~\ref{ThmDetStructure} and the complementation principle for determinantal point processes (see Appendix of~\cite{BOO00}), the expected value in the right side of (\ref{eqspeed}) is given by
\begin{equation}
\det\left[
\begin{array}{cc}
\left[{\cal K}_t(x,n-i;x,n-j)\right]_{i,j=0}^{\ell-1} & \left[-{\cal K}_t(x,n-i;x+1,n-\ell)\right]_{i=0}^{\ell-1}\\
\left[{\cal K}_t(x+1,n-\ell;x,n-j)\right]_{j=0}^{\ell-1} & 1-{\cal K}_t(x+1,n-\ell;x+1,n-\ell)
\end{array}
\right].
\end{equation}

In this paper we show directly that indeed $v$ and $j$ are the same.
\begin{thm}\label{ThmFiniteSystem}
It holds
\begin{equation}
j(x,n,t)=v(x,n,t).
\end{equation}
\end{thm}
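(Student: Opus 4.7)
I would first use the multilinearity of the determinant in the bottom-right entry $1-{\cal K}_t(x+1,n-\ell;x+1,n-\ell)$ to split each summand in the definition \eqref{eqspeed} of $v$. This yields
\[
v(x,n,t) = \sum_{\ell=1}^n a_\ell - \sum_{\ell=1}^{n-1} b_\ell,
\]
where $a_\ell := \det[{\cal K}_t(x,n-i;x,n-j)]_{i,j=0}^{\ell-1}$ is the probability that the sites $(x,n),\ldots,(x,n-\ell+1)$ are all occupied, and $b_\ell$ is the $(\ell+1)\times(\ell+1)$ determinant for the joint occupancy of these sites together with $(x+1,n-\ell)$. Proving the theorem is then equivalent to the purely algebraic identity $\sum_{\ell=1}^n a_\ell - \sum_{\ell=1}^{n-1} b_\ell = {\cal K}_t(x,n;x+1,n)$.

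Next, I would insert the contour representation \eqref{StartingKernel} into each determinant. At equal times the first (single-integral) piece of ${\cal K}_t$ collapses to elementary constants: a short residue computation at $w=0$ shows that, in the matrix $[{\cal K}_t(x,n-i;x,n-j)]$ underlying $a_\ell$, it contributes $-\Id_{i>j}$, whereas on the entries of $b_\ell$ involving the extra site $(x+1,n-\ell)$ it vanishes identically---either because the indicator $\Id_{n_1<n_2}$ is zero, or because, when the indicator is nonzero, the integrand has no pole at $w=0$. Hence the matrix underlying $a_\ell$ has the block form $L_\ell-U_\ell$, with $L_\ell$ built entirely from the double-integral piece $I_2$ and $U_\ell$ the strictly lower triangular matrix of $1$'s; the matrix underlying $b_\ell$ is the same $L_\ell-U_\ell$ bordered by a row and a column of pure $I_2$ values. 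Expanding each determinant by multilinearity with respect to the $L$- and $U$-contributions, and using the Cauchy-type factorization of $I_2$ as $\oint\oint F(w)G(z)(w-z)^{-1}\,dw\,dz$, each term becomes a multi-contour integral with Vandermonde-type factors.

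The combinatorial core is then to show that the alternating sum over $\ell$ collapses to the single double integral $I_2(x,n;x+1,n)={\cal K}_t(x,n;x+1,n)$. The organizing identity, motivated in the introduction by the cofactor structure of the inverse of the Kasteleyn matrix, states that for the specific $L-U$ block pattern above, a prescribed alternating sum of principal minors equals one designated off-diagonal entry of a matrix inverse. The main obstacle I foresee is isolating and proving this linear-algebraic identity cleanly, tracking the cancellations among the $U$-corrections across different values of $\ell$; once it is established, a final residue calculation on the $z$-contour matches the remaining double integral with ${\cal K}_t(x,n;x+1,n)$, completing the proof.
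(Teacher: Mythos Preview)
Your preliminary reductions are correct: the split $v=\sum_{\ell=1}^n a_\ell-\sum_{\ell=1}^{n-1} b_\ell$ holds (by multilinearity in the last column, or equivalently by linearity of expectation), and the single-integral piece of \eqref{StartingKernel} indeed contributes $-\Id_{i>j}$ on the main block and zero on the bordering row and column of $b_\ell$. One small correction: no ``final residue calculation'' is needed at the end, since at equal times and equal levels the indicator in \eqref{StartingKernel} already vanishes, so ${\cal K}_t(x,n;x+1,n)$ \emph{is} the pure double integral $I_2(x,n;x+1,n)$ from the start.

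The gap is exactly the step you yourself flag as the main obstacle: you have not proved the collapsing identity, and the route you sketch---expanding every determinant by multilinearity into multi-contour integrals with Vandermonde factors and then tracking cancellations across $\ell$---is much heavier machinery than the problem requires, with no indication of how the cancellations would actually be organised. The paper avoids contour expansions altogether. It proves two one-line identities for the kernel (Lemmas~\ref{LemIdentity} and~\ref{LemIdentityB}), which act as column and row operations inside any determinant built from ${\cal K}_t$-entries. A single application of each yields the recursion of Proposition~\ref{propRecursionFiniteT}: a bordered determinant containing the pair $(\BB(x,m),\WW(x+1,m))$ equals the sum of two bordered determinants one level lower, schematically $(a)=(c1)+(c2)$ in Figure~\ref{FigDimerRec}. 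Iterating this from $m=n$ down to $m=1$ unwinds ${\cal K}_t(x,n;x+1,n)$ directly into the finite sum \eqref{eq3.11}, which is then matched with \eqref{eqspeed} by the forced-lozenge observation of Figure~\ref{FigProofFiniteTB}. This recursion \emph{is} the ``organizing identity'' you are looking for, and its proof needs only elementary determinant manipulations together with the two kernel relations---no analysis of the contour integrals at all.
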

\begin{rem}
This finite-time result can be easily generalized to the case of level-dependent jump rates. This system is still determinantal when starting with fully-packed initial conditions, with correlation kernel obtained in Proposition 3.1 of~\cite{BF07}; see Corollary~2.26 of~\cite{BF08} too.
\end{rem}

\subsection{Particle, lozenge and dimer representations}\label{SectRepresentations}
The height function, defined in~\eqref{eqDefinHeight}, gives a three dimensional surface  with Cartesian coordinate axis.
In a projection in the $(1,1,1)$-direction, each unit square of the surface becomes a lozenge, while in the projection of Figure~\ref{FigureIntro} there are three types of parallelograms, still referred to as lozenges; see Figure~\ref{FigureTilings}.
\begin{figure}
\begin{center}
\psfrag{Tiles}{Facets}
\psfrag{Lozenges}{Lozenges}
\psfrag{Dimers}{Dimers}
\psfrag{type}{Type - Weight}
\psfrag{I}[c]{I - $b$}
\psfrag{II}[c]{II - $a$}
\psfrag{III}[c]{III - $c$}
\includegraphics[height=6cm]{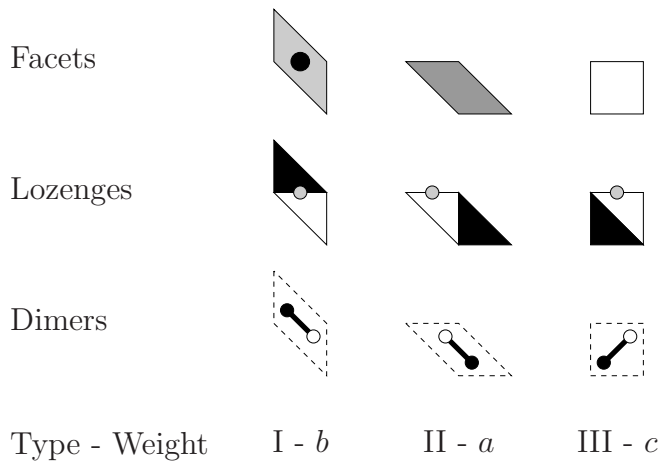}
\caption{Figure~\ref{FigureIntro} facets' types and their associated lozenges and angles. The gray circle is at position $(x,n)$. For dimers, their coordinate is given by the black site. More precisely, in this case we say that dimer of type I is at $(x,n)$, of type II is at $(x+1,n-1)$, while of type III at $(x,n-1)$.}
\label{FigureTilings}
\end{center}
\end{figure}
Another representation useful for the proofs of the theorems in this paper, is through \emph{perfect matchings} or \emph{dimers} via the dual graph associated to the underlying graph of the lozenge tilings.    For a lozenge tiling on a finite graph, the dual graph is a subgraph of the \emph{honeycomb} (or \emph{hexagonal}) graph, with each lozenge representing an edge, called a dimer, on this dual graph; see Figure~\ref{FigureTilings}.  Each lozenge tiling represents a \emph{dimer covering} of the dual graph; see Figure~\ref{FigureDimers} for an example.
\begin{figure}
\begin{center}
\psfrag{a}[r]{$(x,n)$}
\psfrag{b}[l]{$(x+1,n)$}
\psfrag{c}[r]{$(x+1,n-1)$}
\psfrag{d}[l]{$(x,n+1)$}
\includegraphics[height=3.5cm]{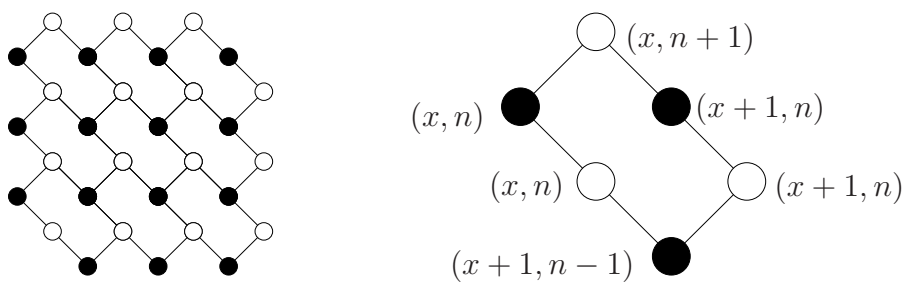}
\caption{Bipartite honeycomb graph for the dimer model and coordinate system of black and white vertices.}
\label{FigCoordinates}
\end{center}
\end{figure}
For the purpose of this paper, we denote $\mathcal{H}$ to be the infinite (bipartite) honeycomb graph; see the left side of Figure~\ref{FigCoordinates} for a finite snapshot.  We set $\mathcal{H}_L=\mathcal{H} \slash L$, that is, the restriction of length-size $L$ of $\mathcal{H}$ with periodic boundary conditions.  We use the standard terminology for the dimer model; see for example Section 1 in~\cite{KenLectures} for details.
%Given a perfect matching of the bipartite graph by dimers, there is a bijection with a lozenge tilings by the relations indicated in , see Figure~\ref{FigureDimers} for an example.
\begin{figure}
\begin{center}
\includegraphics[height=3.5cm]{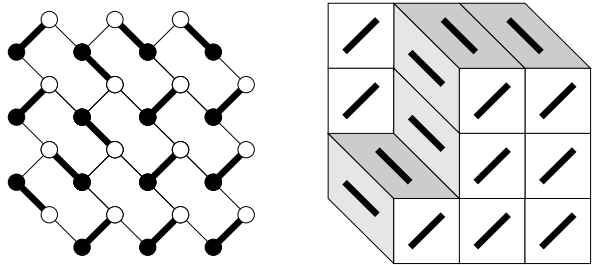}
\caption{From a dimer (thick lines) on the bipartite honeycomb graph to a lozenge configuration.}
\label{FigureDimers}
\end{center}
\end{figure}

\subsection{The model on $\Z^2$}
Now we consider the case where the state space are particles on $\Z^2$ satisfying interlacing between levels $n$ and $n+1$, for any $n\in\Z$, i.e.,
\begin{equation}
{\cal G}=\{x_k^n\in\Z, k,n\in\Z \,|\, x^{n+1}_k< x_k^n\leq x_{k+1}^{n+1},k,n\in\Z\}.
\end{equation}
Formally one would like to consider the dynamics on ${\cal G}$ as described for the model in Section~\ref{sectFiniteT}.  However, the dynamics are not well-defined for all elements in $\cal G$ due to the possibility of pushing from an infinitely long stack of particles.  It is shown in~\cite{Ton15}, that the dynamics are almost surely well-defined starting from the stationary measure because under this measure, the probability of having a stack of particles of length $\ell$ decays exponentially in $\ell$.

The translation invariant stationary measures form a two-parameter family, uniquely determined by the ``two-dimensional slope'' of the height function. In terms of dimers, we give to each type of dimer a weight, say $a,b,c$ as indicated in Figure~\ref{FigureTilings}. The probability of a dimer configuration is then proportional to the product of the weights of each dimer configuration, thus we effectively have  only two free parameters.

One nice property of stationary measures on $\Z^2$ is that their correlation functions are determinantal with a correlation kernel given only in  terms of the slope (which is determined by $a,b,c$). Correlations functions, at different times for the model on $\Z^2$ with stationary initial conditions, are not explicitly known (and for the asymmetric version studied by Toninelli~\cite{Ton15} they are unlikely to be determinantal) and therefore one cannot use (\ref{eq2.6}) to determine the speed of growth. However, the speed of growth is defined by the dynamics via (\ref{eqspeed}) with $n$ replaced by $\infty$.

Sheffield showed that indeed the translation invariant measures are uniquely determined by the slope of the height function.
\begin{thm}[Sheffield~\cite{Shef05}]\label{ThmSheffield}
For each $\nu = (p_a, p_b, p_c)$ with $p_a, p_b, p_c \geq 0$ with $p_a+p_b+p_c=1$ there is a unique translation-invariant ergodic Gibbs measure $M_\nu$ on
the set of dimer coverings of $\cal H$, for which the height function has average normal $\nu$. This measure can be obtained as the limit as $L\to\infty$ of the uniform measure on the set of those dimer coverings of $\CH_L$, whose proportion of dimers in the three orientations is $(p_a : p_b : p_c)$, up to errors tending to zero as $L\to\infty$. Moreover every ergodic Gibbs measure on $\cal H$ is of the above type for some $\nu$.
\end{thm}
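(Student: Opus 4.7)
The plan is to follow a thermodynamic-limit construction together with a uniqueness argument based on couplings of height functions, along the lines of Sheffield's original proof. First I would build $M_\nu$ as a subsequential weak limit. On the torus $\CH_L$ the set of dimer coverings is finite, and the uniform measure conditioned on having dimer proportions $(p_a,p_b,p_c)$ is well defined (one checks by Kasteleyn/large-deviation estimates for the torus partition function that this conditioning has non-vanishing probability for $L$ large when $\nu$ is admissible). The state space $\{0,1\}^{E(\CH)}$ with the product topology is compact, so tightness is automatic, and any weak limit $M$ is shift-invariant because each torus measure is invariant under the finite translation group. The DLR/Gibbs property passes to the limit as well: any uniform measure on $\CH_L$ is, by construction, uniform on any finite sub-region conditional on its complement, and this local property is preserved under weak limits because it only involves finitely many edges. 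Finally, the fact that $M$ has slope $\nu$ follows from the concentration of the torus dimer proportions under the conditioning, together with the relation between dimer proportions and the mean slope of the height function.

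Next I would establish ergodicity and uniqueness jointly, which is the hard core of the theorem. Given two translation-invariant Gibbs measures $M^{(1)}$ and $M^{(2)}$ with the same slope $\nu$, I would consider a monotone coupling of the corresponding height functions $h^{(1)}, h^{(2)}$ on $\CH$ (heights are well defined on faces once a base face is chosen, since each dimer configuration determines integer height differences through Thurston's convention). Under this coupling, $D=h^{(1)}-h^{(2)}$ is a translation-invariant integer-valued function on faces whose expectation is $0$ (equal slopes), and whose local increments are bounded. The main technical step — and the principal obstacle — is then a cluster-swapping argument: on any finite region $R$, one partitions $R$ into connected components according to the sign of $D$ and shows that swapping the restrictions of the two configurations on a component produces two new configurations that are compatible with the Gibbs property and which strictly reduce $\sum_{R}|D|$ unless $D\equiv 0$. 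Using the Gibbs property of each $M^{(i)}$ and the Radon--Nikodym equivalence of the two restrictions on finite regions with the same boundary, one deduces that such swaps preserve the joint law, and iterating forces $D$ to be constant almost surely. By translation invariance and integer-valuedness the constant is $0$, hence $M^{(1)}=M^{(2)}$. The same swapping argument, applied to two realizations drawn independently from any \emph{ergodic} component of a translation-invariant Gibbs measure of slope $\nu$, shows that the ergodic decomposition must be trivial, so $M_\nu$ itself is ergodic.

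For the classification statement, let $M$ be an arbitrary translation-invariant ergodic Gibbs measure on dimer coverings of $\CH$. I would apply the multidimensional Birkhoff ergodic theorem to the three indicator functions $\Id_{\{e\text{ is a dimer of type }a\}}$, etc., evaluated on a fundamental domain, to conclude that the dimer proportions along any Følner sequence converge $M$-a.s.\ to deterministic constants $(p_a,p_b,p_c)$. These constants define a slope $\nu$, and then the uniqueness result above forces $M=M_\nu$. The final sentence of the theorem — that $M_\nu$ is the weak limit (up to errors vanishing in $L$) of the uniform measure on dimer coverings of $\CH_L$ with prescribed proportions — is then automatic from the construction, since any subsequential limit is an ergodic translation-invariant Gibbs measure of slope $\nu$ and must therefore equal the unique such measure $M_\nu$.

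Throughout, I expect the swapping/uniqueness step to be the main obstacle: it requires delicately combining the Gibbs property, translation invariance, and monotonicity of heights, and it is exactly this step where Sheffield's combinatorial insight into cluster rearrangements does the heavy lifting. Existence and classification, by contrast, are relatively standard compactness and ergodic-theoretic arguments once uniqueness is in hand.
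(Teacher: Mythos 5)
The paper offers no proof of this statement: it is quoted verbatim as Sheffield's theorem and used as an external input (the toroidal-exhaustion description is what feeds into the identification of $\K^{-1}_{abc}$ and into Proposition~\ref{PropRecursionInfinite}), so there is no in-paper argument to compare yours against. Measured instead against Sheffield's actual proof, your outline identifies the correct architecture --- compactness for existence, the ergodic theorem for classification, and cluster swapping for uniqueness --- and the existence and classification paragraphs are essentially complete modulo routine details (admissibility of the conditioning on the torus, and the linear relation between edge densities and slope).

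The genuine gap is exactly where you flag it, but it is worth being precise about why your description of the swapping step would not go through as written. You assert that swapping the two configurations on a sign-component of $D=h^{(1)}-h^{(2)}$ ``strictly reduces $\sum_R|D|$ unless $D\equiv 0$,'' and that iterating forces $D$ to be constant. Swapping on a cluster does not reduce $|D|$ --- it permutes the values of $D$ on that cluster's complementary regions and leaves $\sum|D|$ invariant; the content of Sheffield's argument is rather that the swapped pair is again a coupling of translation-invariant Gibbs measures of the same slopes, and that coexistence of unbounded positive and negative clusters of $D$ (which translation invariance plus mean zero would force if $D$ were non-constant) is incompatible with this swap-invariance together with the variational characterization of Gibbs measures via specific free energy. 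Also, the existence of a monotone coupling of two arbitrary Gibbs measures with equal slope is itself a nontrivial step (Sheffield works with more general couplings and only extracts monotonicity a posteriori), so it cannot be taken as the starting point. None of this makes your plan wrong --- it is the standard and, as far as I know, only known route --- but the decisive lemma is named rather than proved, and the mechanism you attribute to it is not the one that actually closes the argument.
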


The correlation functions of the stationary measure $M_\nu$ are determinantal and they are given in terms of the so-called inverse Kasteleyn matrix, denoted by $\K_\nu^{-1}$ (see more details in Section~\ref{Sect3.2} and the lecture notes~\cite{KenLectures} for a complete treatment of the subject). It is given as follows (it is obtained from Eq.~(4) in~\cite{Ken04} with appropriate change of coordinates)
\begin{equation}\label{eq2.12}
\K_\nu^{-1}(x,n;x',n')=b \left(\frac{a}{c}\right)^{x-x'}\left(\frac{b}{c}\right)^{n-n'} \K^{-1}_{abc}(x,n;x',n')
\end{equation}
with
\begin{equation} \label{eq2.13}
\K^{-1}_{abc}(x,n;x',n') = \frac{1}{(2\pi\I)^2}\oint_{|z|=1}dz\oint_{|w|=1}dw \frac{z^{n-n'} w^{n'-n+x'-x-1}}{a+bz+cw}.
\end{equation}
As shown by Kenyon, Okounkov and Sheffield in~\cite{KOS03}, the latter is the limiting inverse Kasteleyn matrix obtained in the toroidal exhaustion limit where the edge weights $a,b$ and $c$ are depicted in Figure~\ref{FigureTilings}.

Kenyon introduced a very useful mapping from $(p_a,p_b,p_c)$ to the upper-half complex plane $\HH$ illustrated in Figure~\ref{FigGeometry}.
\begin{figure}[h!]
\begin{center}
\psfrag{ba}[l]{$b/a$}
\psfrag{ca}[r]{$c/a$}
\psfrag{pa}{$\theta_a$}
\psfrag{pb}{$\theta_b$}
\psfrag{pc}{$\theta_c$}
\psfrag{Omega}[c]{$\Omega_{abc}$}
\psfrag{0}[c]{$0$}
\psfrag{1}[c]{$1$}
\includegraphics[height=4cm]{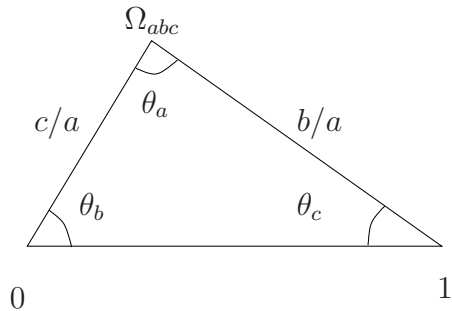}
\caption{Point $\Omega_{abc}\in\HH$ associated with $(p_a,p_b,p_c)$ is given by constructing the triangle with basis $\overline{01}$ and angles $\theta_k=\pi p_k$, $k\in\{a,b,c\}$ as indicated. The length of the segment $\overline{0\Omega_{abc}}$ is $c/a$ and the length of the segment of $\overline{1\Omega_{abc}}$ is $b/a$.}
\label{FigGeometry}
\end{center}
\end{figure}
By change of variables $z\to z a/b$ and $w\to w a/c$ in Eq.~\eqref{eq2.13} one obtains
\begin{equation}\label{eq2.14}
\K^{-1}_\nu(x,n;x',n')=\frac{1}{(2\pi\I)^2}\oint_{|z|=b/a}dz\oint_{|w|=c/a}dw \frac{z^{n-n'} w^{n'-n+x'-x-1}}{1+z+w}.
\end{equation}
A simple computation (see Appendix) leads to the following representation of the kernel
\begin{equation}\label{eqApp1}
\K^{-1}_\nu(x,n;x',n')=\frac{(-1)^{n-n'+x-x'}}{2\pi\I}\int_{\overline\Omega_{abc}}^{\Omega_{abc}} dw \frac{(w-1)^{n-n'}}{w^{n-n'+x-x'+1}},
\end{equation}
where for $n\geq n'$ the integration contour crosses $\R_+$, while for $n<n'$ the contour crosses $\R_-$. This is consistent with Theorem~5.1 and Proposition~3.2 of~\cite{BF08}, where the correlation kernel was obtained by taking the finite system considered in Theorem~\ref{ThmFiniteSystem} and followed by taking the large time/space limit in such a way that the local normal direction of the surface is given by $(p_a,p_b,p_c)$\footnote{To see the exactness of the connection from the kernel in Proposition~3.2 of~\cite{BF08}, one needs to keep in mind this kernel is the limit of the one in Corollary~4.1 of~\cite{BF08} instead of the original ${\cal K}_t$: this introduced a conjugation factor and a shift in the $x$ by $n$. Finally, the point in the complex plane $\Omega$ in~\cite{BF08} equals $1-\overline\Omega_{abc}$ here.}.

\begin{thm}\label{ThmInfiniteSystem}
Consider the particle model on $\Z^2$ distributed according to $M_\nu$, that is, with determinantal correlation functions given by the correlation kernel (\ref{eqApp1}). Then the speed of growth is given by
\begin{equation}\label{eq2.17}
v=-\K^{-1}_\nu(x,n;x+1,n)=\frac{\Im(\Omega_{abc})}{\pi}= \frac{1}{\pi}\frac{\sin(\theta_b)\sin(\theta_c)}{\sin(\theta_a)}.
\end{equation}
\end{thm}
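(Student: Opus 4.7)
The plan is to decompose the statement in \eqref{eq2.17} into three equalities and handle them separately. The two rightmost equalities are direct computations; the first is the combinatorial heart of the theorem.

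For $-\K^{-1}_\nu(x,n;x+1,n) = \Im(\Omega_{abc})/\pi$, I would substitute $n-n'=0$ and $x-x'=-1$ into \eqref{eqApp1}. The integrand then collapses to $1$, the straight-line integral from $\overline\Omega_{abc}$ to $\Omega_{abc}$ equals $2\I\,\Im(\Omega_{abc})$, and the prefactor $(-1)^{-1}/(2\pi\I)$ produces the claimed value. For the identification $\Im(\Omega_{abc})/\pi = \sin(\theta_b)\sin(\theta_c)/(\pi\sin(\theta_a))$, one applies the law of sines to the triangle in Figure~\ref{FigGeometry}: $\Im(\Omega_{abc})$ is its height, while $|\overline{0\,\Omega_{abc}}| = c/a$ and $|\overline{1\,\Omega_{abc}}| = b/a$, and the three angles sum to $\pi$.

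For the nontrivial equality $v = -\K^{-1}_\nu(x,n;x+1,n)$, the plan is to view \eqref{eqspeed} in the infinite setting as an explicit sum of determinants. Each expectation, via the complementation principle together with the fact that the stationary measure $M_\nu$ has determinantal correlations with kernel \eqref{eqApp1}, is an $(\ell+1)\times(\ell+1)$ determinant $D_\ell$ whose entries are $\K^{-1}_\nu(y,m;y',m')$ with $y,y'\in\{x,x+1\}$ and $m,m'\in\{n,n-1,\dots,n-\ell\}$. The target is then the purely algebraic identity
\[
-\K^{-1}_\nu(x,n;x+1,n) \;=\; \sum_{\ell=1}^{\infty} D_\ell,
\]
which I would prove by first establishing its finite analogue on an arbitrary tileable honeycomb graph with arbitrary positive edge weights, exactly the relation alluded to in the Introduction and already used in Theorem~\ref{ThmFiniteSystem}, and then specializing to the translation-invariant case. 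The finite identity would be derived by a column expansion of the relevant matrix entry $\K^{-1}(x,n;x+1,n)$ along the vertical line below $(x,n)$ using Dodgson/Jacobi-type relations among minors of the Kasteleyn matrix; the hexagonal structure should force the expansion to organize itself exactly into the form \eqref{eqspeed}.

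The main obstacle is producing the Kasteleyn-side expansion in a form that matches $D_\ell$ term by term without tautologically re-using the dynamical interpretation. A secondary obstacle, if one obtains the translation-invariant identity by applying the finite identity on the toroidal exhaustion $\CH_L$ and passing to $L\to\infty$, is to justify the exchange of limits with the infinite sum; this requires an $L$-uniform bound on $D_\ell$ by a summable sequence in $\ell$, which should follow from the exponential decay of bulk dimer correlations established in~\cite{Ton15}. Alternatively, one can bypass the torus entirely and verify the identity directly at the level of the explicit double contour integral \eqref{eq2.14}, summing the resulting series and showing that it collapses to the single-integral expression \eqref{eqApp1} evaluated at $(x,n;x+1,n)$.
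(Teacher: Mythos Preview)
Your treatment of the two rightmost equalities in \eqref{eq2.17} is correct and is exactly what the paper does (the first is the content of the Appendix computation, the second is elementary trigonometry from Figure~\ref{FigGeometry}).

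For the main equality $v=-\K^{-1}_\nu(x,n;x+1,n)$, your overall architecture---prove a finite-graph identity, then pass to the infinite limit---is the same as the paper's. However, the implementation differs in two places, and one of them is a genuine pitfall. First, the paper does not obtain the finite identity via Dodgson/Jacobi minor relations on the Kasteleyn matrix; instead it uses Cramer's rule once to write $-\K_G^{-1}(\BB(x,n),\WW(x+1,n))=Z_{G\setminus\Sigma_0}/Z_G$ and then runs a purely combinatorial recursion on the partition functions $Z_{G\setminus\Sigma_m}$ (Proposition~\ref{PropRecursionGraph}), reading off forced dimers along the column below $(x,n)$. This is more elementary than a minor-identity argument and, crucially, it produces the remainder $R_G(N)$ as a positive ratio of partition functions, which makes the tail bound immediate. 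Second---and this is the point to be careful about---the paper explicitly warns (Remark after Proposition~\ref{PropRecursionInfinite}) that the toroidal-exhaustion route you propose has a sign problem: the four spin structures on $\CH_L$ spoil the na\"ive identification of $\det\K$ with $Z$, and an earlier version of the proof along those lines was incorrect. The paper circumvents this by working instead on large boxed plane partitions (planar, so no sign ambiguity) and invoking Petrov's bulk convergence of $\K^{-1}$ to reach $\K^{-1}_{abc}$. Finally, the paper's control of the remainder is softer than yours: rather than importing the exponential decay from~\cite{Ton15}, it simply bounds $R_N$ by a constant times the $N$th summand, and since the left side of \eqref{eq3.40} is finite while the right side is a sum of nonnegative terms, the $N$th summand must tend to zero.
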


\begin{rem}
The speed of growth of the model defined in~\cite{Ton15} is then given by \mbox{$(p-q)v$}, where $p,q$ are the two parameters in the model of~\cite{Ton15}. The totally asymmetric dynamics is the one given by $p=1$ and $q=0$.
\end{rem}

\section{Proof of theorems}\label{SectProof}

\subsection{Algebraic proof of Theorem~\ref{ThmFiniteSystem}}
The algebraic proof presented here is strongly inspired by the combinatorial proof for finite graphs obtained beforehand that will be used to prove  Theorem~\ref{ThmInfiniteSystem}. In this proof we will suppress all the $t$ indices and have reasonably sized matrices, we use the notation
\begin{equation}
{\cal K}_{x,n;x',n'}:={\cal K}(x,n;x',n').
\end{equation}

We use the coordinate system illustrated in Figure~\ref{FigCoordinates}.
We derive a series expansion of ${\cal K}(x,n;x+1,n)$ by expanding it step-by-step.
The idea behind the expansion is that ${\cal K}(x,n;x+1,n)$ is (intuitively) suggestive of an ``edge'' between $\BB(x,m)$ and $\WW(x+1,m)$. If this ``edge'' is covered by a dimer, then either there is a dimer covering $(\BB(x+1,m-1),\WW(x,m))$ or not. In the latter case, then there are dimers covering $(\BB(x,m-1),\WW(x,m))$ and $(\BB(x+1,m-1),\WW(x+1,m-1))$. This can be repeated for $m=n,n-1,\ldots,1$ and for $m=1$. The latter case will not occur and therefore the series naturally ends. This idea is exploited in depth for the finite graph lozenge tiling; see Proposition~\ref{PropRecursionGraph}.

We start with two algebraic identities satisfied by the kernel (\ref{StartingKernel}).
\begin{lem}\label{LemIdentity}
It holds
\begin{equation}
{\cal K}_{x,n;x'+1,n'-1} = {\cal K}_{x,n;x'+1,n'}-{\cal K}_{x,n;x',n'} +\delta_{n,n'-1}\delta_{x'+1,x}.
\end{equation}
\end{lem}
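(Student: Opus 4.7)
The plan is to decompose ${\cal K} = A + B$ where $A$ is the single-contour indicator term and $B$ is the double-contour term from \eqref{StartingKernel}, and verify the identity separately for $A$ and for $B$. At equal times the exponential factors $e^{t/w}, e^{-t/z}$ are the same in every term, and the partial-order indicator reduces to $\Id_{n<n'}$, so only the pure algebraic structure in $x,x',n,n'$ enters.

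First I will handle $B$, which turns out to satisfy the identity with no delta correction. The three relevant integrands differ only through the factors $(1-z)^{-n'}$ or $(1-z)^{-(n'-1)}$ and $z^{-(x'+2)}$ or $z^{-(x'+1)}$. Writing
\begin{equation}
\frac{1}{z^{x'+2}}-\frac{1}{z^{x'+1}}=-\frac{1-z}{z^{x'+2}}\cdot(-1)=\frac{1-z}{z^{x'+2}}\cdot\frac{1}{-1}\cdot(-1)
\end{equation}
(more cleanly, $\tfrac{1}{z^{x'+2}}-\tfrac{1}{z^{x'+1}}=\tfrac{1-z}{z^{x'+2}}\cdot(-1)\cdot(-1)$), the factor $(1-z)$ lowers the exponent of $(1-z)$ in the denominator of $B_{x,n;x',n'}$ by one and produces exactly the integrand of $B_{x,n;x'+1,n'-1}$. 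Hence $B_{x,n;x'+1,n'}-B_{x,n;x',n'}=B_{x,n;x'+1,n'-1}$, with no extra term.

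For $A$, the same algebraic step (at the level of the integrand) yields
\begin{equation}
A_{x,n;x'+1,n'}-A_{x,n;x',n'}=\Id_{n<n'}\left(-\frac{1}{2\pi\I}\oint_{\Gamma_0}\frac{\D w}{w^{x'-x+2}(1-w)^{n'-n-1}}\right),
\end{equation}
while $A_{x,n;x'+1,n'-1}$ carries $\Id_{n<n'-1}$ in front of the same integral. The two indicators agree except when $n=n'-1$, so the identity reduces to analyzing that single boundary case. When $n<n'-1$, $n=n'$, or $n>n'$, the two sides match trivially (with $\delta_{n,n'-1}=0$). When $n=n'-1$, the exponent on $(1-w)$ is $0$ so the contour integral reduces to $-\tfrac{1}{2\pi\I}\oint_{\Gamma_0}\D w/w^{x'-x+2}$, whose residue at the only pole $w=0$ equals $-\delta_{x,x'+1}$, and this is exactly cancelled by the $+\delta_{n,n'-1}\delta_{x'+1,x}$ on the right, leaving $0$, which agrees with $A_{x,n;x'+1,n'-1}=0$ in this regime.

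There is no conceptual obstacle; the only subtlety is the careful bookkeeping of the indicator $\Id_{[(n_1,t_1)\prec(n_2,t_2)]}$ at equal times and the appearance of the Kronecker deltas from the residue at $w=0$ when the integrand loses its $(1-w)$-pole. I would present the proof as a direct computation, writing the three kernels explicitly, performing the subtraction inside the integrals, and finally reading off the boundary contribution by residues at $w=0$.
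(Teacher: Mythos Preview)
Your proposal is correct and follows the same approach as the paper, which simply says ``It is quite trivial. One uses linearity of the integrals.'' You have merely spelled out in detail what the paper leaves implicit, including the bookkeeping of the indicator $\Id_{n<n'}$ at the boundary $n=n'-1$ that produces the Kronecker delta (your first displayed equation for the $B$-part is garbled with stray $(-1)$'s, but the intended identity $\tfrac{1}{z^{x'+2}}-\tfrac{1}{z^{x'+1}}=\tfrac{1-z}{z^{x'+2}}$ and the conclusion you draw from it are correct).
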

\begin{proof}
It is quite trivial. One uses linearity of the integrals.
\end{proof}

\begin{lem}\label{LemIdentityB}
It holds
\begin{equation}
{\cal K}_{x,n-1;x',n'}-{\cal K}_{x+1,n-1;x',n'} = {\cal K}_{x,n;x',n'} +\delta_{n',n}(\delta_{x',x+1}-\delta_{x',x}).
\end{equation}
\end{lem}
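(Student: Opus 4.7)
The plan is to follow the template of the preceding Lemma~\ref{LemIdentity}: split the kernel into its single-contour piece (carrying the indicator) and its double-contour piece, then combine the two terms on the left using the algebraic identity $w^x-w^{x+1}=w^x(1-w)$ applied inside the integrand. Writing, at equal times,
\[
{\cal K}_{x_1,n_1;x_2,n_2} = \Psi_{x_1,n_1;x_2,n_2} - \phi_{x_1,n_1;x_2,n_2}\,\Id_{[n_1<n_2]},
\]
with $\Psi$ the double integral over $\Gamma_0\times\Gamma_1$ and
\[
\phi_{x_1,n_1;x_2,n_2} = \frac{1}{2\pi\I}\oint_{\Gamma_0}\frac{\D w}{w^{x_2-x_1+1}\,(1-w)^{n_2-n_1}},
\]
I would analyse the $\Psi$-part and the $\phi$-part separately.

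First I would do the double-contour piece. In the difference $\Psi_{x,n-1;x',n'}-\Psi_{x+1,n-1;x',n'}$ one factors $w^x-w^{x+1}=w^x(1-w)$ out of the $w$-integrand; the extra $(1-w)$ promotes $(1-w)^{n-1}$ to $(1-w)^n$, producing $\Psi_{x,n;x',n'}$ exactly, with no correction. For the single-contour piece the parallel algebraic identity
\[
\frac{1}{w^{x'-x+1}}-\frac{1}{w^{x'-x}} = \frac{1-w}{w^{x'-x+1}}
\]
gives $\phi_{x,n-1;x',n'}-\phi_{x+1,n-1;x',n'}=\phi_{x,n;x',n'}$. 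Combining the two identities and tracking the indicators $\Id_{[n-1<n']}$ (on the LHS) and $\Id_{[n<n']}$ (on the RHS) closes the lemma as soon as the two indicators agree, that is, whenever $n'\neq n$; in the case $n'\leq n-1$ both vanish, and in the case $n'>n$ both equal $1$ and the cancellation is exact.

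The main and essentially only obstacle is the boundary case $n'=n$, where the LHS keeps its single-contour piece (because $n-1<n$) while the RHS loses it (because $n\not<n$), so the $\phi$-identity cannot be invoked directly. I would resolve this by evaluating $\phi_{x,n-1;x',n}$ and $\phi_{x+1,n-1;x',n}$ explicitly as residues at $w=0$ of the integrand $\bigl(w^{a}(1-w)\bigr)^{-1}$, expanding $\frac{1}{1-w}=\sum_{k\ge 0}w^k$; both are step functions of $x'$, and the leftover mismatch produces precisely the right-hand side delta correction $\delta_{n',n}(\delta_{x',x+1}-\delta_{x',x})$. Assembling the three regimes $n'<n$, $n'>n$ and $n'=n$ then completes the proof. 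The heart of the argument is routine linearity and the factorization $w^x(1-w)$; the only place one has to think is the indicator bookkeeping at $n'=n$.
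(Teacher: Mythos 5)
Your decomposition of the equal-time kernel into the double-contour piece and the single-contour piece $\phi_{x_1,n_1;x_2,n_2}\,\Id_{[n_1<n_2]}$, the exact telescoping of the double-contour part via $w^{x}-w^{x+1}=w^{x}(1-w)$, the parallel identity $\phi_{x,n-1;x',n'}-\phi_{x+1,n-1;x',n'}=\phi_{x,n;x',n'}$, and the observation that everything cancels except in the boundary case $n'=n$ are all correct; this is exactly what the paper's one-line ``linearity of the integrals'' argument amounts to. The gap is in the one step you assert instead of carrying out. At $n'=n$ the residues are elementary: $\phi_{x,n-1;x',n}=\frac{1}{2\pi\I}\oint_{\Gamma_0}\frac{\D w}{w^{x'-x+1}(1-w)}=\Id_{[x'\geq x]}$ and $\phi_{x+1,n-1;x',n}=\Id_{[x'\geq x+1]}$, so their difference is $\delta_{x',x}$, and with the minus sign carried by the single-contour term the leftover mismatch at $n'=n$ is $-\delta_{n',n}\,\delta_{x',x}$ and nothing else. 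In particular at $x'=x+1$ the two residues both equal $1$ and cancel, so no $\delta_{x',x+1}$ contribution can arise; your claim that the bookkeeping ``produces precisely'' the correction $\delta_{n',n}(\delta_{x',x+1}-\delta_{x',x})$ is therefore false as stated.

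Carried out honestly, your method proves ${\cal K}_{x,n-1;x',n'}-{\cal K}_{x+1,n-1;x',n'}={\cal K}_{x,n;x',n'}-\delta_{n',n}\delta_{x',x}$, which differs from the displayed lemma by the spurious term $\delta_{n',n}\delta_{x',x+1}$; indeed the displayed identity fails at $(x',n')=(x+1,n)$, where ${\cal K}_{x,n-1;x+1,n}$ and ${\cal K}_{x+1,n-1;x+1,n}$ each have single-contour part $-1$ (which cancel in the difference) while ${\cal K}_{x,n;x+1,n}$ has none, so no correction is needed there. (The same residue check applied to Lemma~\ref{LemIdentity} reproduces its stated delta exactly, so this is not a matter of conventions.) So the flaw is partly in the statement as printed, but your writeup papers over it by declaring the final bookkeeping to come out right rather than exhibiting the three residue evaluations; as written, the crucial case $n'=n$ is unproved, and with the printed right-hand side it cannot be proved. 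A complete argument must display the residues, record the correction $-\delta_{n',n}\delta_{x',x}$ that they actually give, and note the discrepancy with the displayed formula.
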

\begin{proof}
It is quite trivial. One uses linearity of the integrals.
\end{proof}
We start with a proposition that will be used recursively.
\begin{prop}\label{propRecursionFiniteT}
Consider any set of $M$ $\{(\BB(x_i,m_i),\WW(x_i',m_i'),1\leq i \leq M\}$ black/white couples of (disjoint) vertices that do not include black vertices at $(x,m),(x+1,m-1),(x,m-1)$ and white at $(x+1,m),(x,m),(x+1,m-1)$. Then
\begin{equation}\label{eq3.3}\small
\begin{aligned}
& \det\left[\begin{array}{ll}
{\cal K}_{x_i,m_i;x_j',m_j'} & {\cal K}_{x_i,m_i;x+1,m} \\
{\cal K}_{x,m;x_j',m_j'} & {\cal K}_{x,m;x+1,m}
\end{array}\right]=\det\left[\begin{array}{lll}
{\cal K}_{x_i,m_i;x_j',m_j'} & {\cal K}_{x_i,m_i;x,m} & {\cal K}_{x_i,m_i;x+1,m}\\
{\cal K}_{x,m;x_j',m_j'} & {\cal K}_{x,m;x,m} & {\cal K}_{x,m;x+1,m} \\
{\cal K}_{x,m-1;x_j',m_j'} & {\cal K}_{x,m-1;x,m} & {\cal K}_{x,m-1;x+1,m}
\end{array}\right]\\
&+\det\left[\begin{array}{llll}
{\cal K}_{x_i,m_i;x_j',m_j'} & {\cal K}_{x_i,m_i;x,m} & {\cal K}_{x_i,m_i;x+1,m-1} & {\cal K}_{x_i,m_i;x+1,m}\\
{\cal K}_{x,m;x_j',m_j'} & {\cal K}_{x,m;x,m} & {\cal K}_{x,m;x+1,m-1} & {\cal K}_{x,m;x+1,m}\\
{\cal K}_{x,m-1;x_j',m_j'} & {\cal K}_{x,m-1;x,m} & {\cal K}_{x,m-1;x+1,m-1} & {\cal K}_{x,m-1;x+1,m}\\
{\cal K}_{x+1,m-1;x_j',m_j'} & {\cal K}_{x+1,m-1;x,m} & {\cal K}_{x+1,m-1;x+1,m-1} & {\cal K}_{x+1,m-1;x+1,m}
\end{array}\right],
\end{aligned}
\end{equation}
where, whenever there is a $i$ or $j$ index, this means a block-matrix with $i$ and/or $j$ from $1$ to $M$.
Schematically this is represented in Figure~\ref{FigDimerRec}, where (\ref{eq3.3}) equals (a)=(c1)+(c2).
\end{prop}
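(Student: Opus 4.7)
The plan is to derive the recursion in~(\ref{eq3.3}) by a careful application of the two algebraic identities in Lemmas~\ref{LemIdentity} and~\ref{LemIdentityB}, exploiting multilinearity of the determinant in the columns and rows corresponding to the ``new'' vertices $(x,m)$, $(x,m-1)$, and $(x+1,m-1)$. The whole argument is guided by the combinatorial case split sketched in Figure~\ref{FigDimerRec}: if the ``edge'' $(\BB(x,m),\WW(x+1,m))$ is covered by a dimer, then either the edge $(\BB(x+1,m-1),\WW(x,m))$ is also a dimer (contributing the $(M+3)\times(M+3)$ term), or else it is not, in which case the interlacing forces dimers on $(\BB(x,m-1),\WW(x,m))$ and $(\BB(x+1,m-1),\WW(x+1,m-1))$ (contributing the $(M+2)\times(M+2)$ term).

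Concretely, I would start from the larger RHS determinant and apply Lemma~\ref{LemIdentity} to the column indexed by $(x+1,m-1)$: for every row $(X,N)$ one has ${\cal K}_{X,N;x+1,m-1} - {\cal K}_{X,N;x+1,m} + {\cal K}_{X,N;x,m} = \delta_{N,m-1}\delta_{X,x+1}$. The column operation $C_{(x+1,m-1)}\to C_{(x+1,m-1)}-C_{(x+1,m)}+C_{(x,m)}$ preserves the determinant and collapses that column to the unit vector supported at the row $(x+1,m-1)$. Cofactor expansion along this sparse column then produces a signed $(M+2)\times(M+2)$ minor with a row/column pattern that almost matches the first RHS determinant, the discrepancy being an additional row and column attached at the vertex $(x+1,m-1)$.

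The remaining discrepancy is reconciled by applying Lemma~\ref{LemIdentityB} on the row indexed by $(x,m-1)$ of the first RHS determinant (or equivalently, on the LHS). The identity expresses $R_{(x,m-1)} = R_{(x,m)}+R_{(x+1,m-1)} + (\text{delta correction})$, where the correction row is supported only at the two columns $(x,m)$ and $(x+1,m)$ (those are the only columns for which the $\delta_{n',n}(\delta_{x',x+1}-\delta_{x',x})$ term in Lemma~\ref{LemIdentityB} fires). Splitting this row by multilinearity yields three determinants: the one containing $R_{(x,m)}$ twice vanishes; the one containing $R_{(x+1,m-1)}$ matches the contribution from the previous paragraph; and the delta-correction term reduces, via cofactor expansion along that row, to the LHS block determinant. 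Collecting the three contributions yields identity~(\ref{eq3.3}).

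The genuine obstacle is purely bookkeeping: each use of Lemma~\ref{LemIdentity} or Lemma~\ref{LemIdentityB} injects a $\pm 1$ at a distinguished row/column and a sign from the cofactor expansion, and one must verify that all these signs line up so that the two RHS determinants emerge with the stated coefficient $+1$. The combinatorial picture in Figure~\ref{FigDimerRec} serves as the compass here, dictating which lemma to apply at which vertex and ensuring that the delta corrections on the two sides pair up cleanly rather than producing spurious leftover terms.
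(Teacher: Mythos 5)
Your toolkit is the right one --- Lemma~\ref{LemIdentity} as a column operation collapsing the $(x+1,m-1)$ column of the large determinant, Lemma~\ref{LemIdentityB} as a row identity at $(x,m-1)$ --- and this is indeed the paper's toolkit. But the cancellation scheme you describe does not close, and the problem is in your second paragraph. The operation $C_{(x+1,m-1)}\to C_{(x+1,m-1)}-C_{(x+1,m)}+C_{(x,m)}$ does turn that column into the unit vector supported at the row $(x+1,m-1)$, but the ensuing cofactor expansion then deletes \emph{both} the row and the column labelled $(x+1,m-1)$. What survives is not a minor ``with an additional row and column attached at $(x+1,m-1)$'': it is, entry for entry, the first determinant on the right of (\ref{eq3.3}) (rows $(x_i,m_i),(x,m),(x,m-1)$; columns $(x_j',m_j'),(x,m),(x+1,m)$), times the cofactor sign $(-1)^{(M+3)+(M+2)}=-1$ determined by the positions of the deleted row and column. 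So after your step the two right-hand determinants are, up to sign, the same object, and your third step has no determinant containing the row $(x+1,m-1)$ left to absorb the $R_{(x+1,m-1)}$ piece produced by Lemma~\ref{LemIdentityB}: the advertised pairing of delta corrections has no partner. (Taken at face value your step even makes the two displayed determinants cancel identically, which cannot be the intended reading of the statement; this is exactly why the orderings and signs must be pinned down explicitly --- check the $M=0$ case by hand before anything else.)

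The ingredient your plan is missing, and which the paper supplies, is the auxiliary determinant (\ref{eq3.5}): an $(M+2)$-size determinant with an overall minus sign whose extra \emph{row} is $(x+1,m-1)$ but whose extra \emph{column} is $(x,m)$ --- the row $(x+1,m-1)$ survives precisely because it is paired with a column other than $(x+1,m-1)$. The paper reduces the large determinant via Lemma~\ref{LemIdentity} to (\ref{eq3.8}), whose last row is $(x,m-1)$, and then applies Lemma~\ref{LemIdentityB} to the \emph{pair} consisting of (\ref{eq3.5}) and (\ref{eq3.8}), which differ only in that last row; the merged row becomes ${\cal K}_{x,m;\cdot}$ plus a delta row, the duplicate-row piece vanishes, and the delta row yields the left side. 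Your version applies Lemma~\ref{LemIdentityB} inside the first right-hand determinant alone and hopes the $R_{(x+1,m-1)}$ term cancels against the reduced large determinant, which simply does not contain that row. Note also that the delta correction in Lemma~\ref{LemIdentityB} is supported on \emph{two} columns, $(x+1,m)$ and $(x,m)$, so expanding along it produces two minors, only one of which is the left-hand block; your sketch leaves the other unaccounted for.
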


\begin{figure}
\begin{center}
\psfrag{=}[c]{$=$}
\psfrag{+}[c]{$+$}
\psfrag{-}[c]{$-$}
\psfrag{a}[c]{(a)}
\psfrag{b1}[c]{(b1)}
\psfrag{b2}[c]{(b2)}
\psfrag{c1}[c]{(c1)}
\psfrag{c2}[c]{(c2)}
\includegraphics[height=3cm]{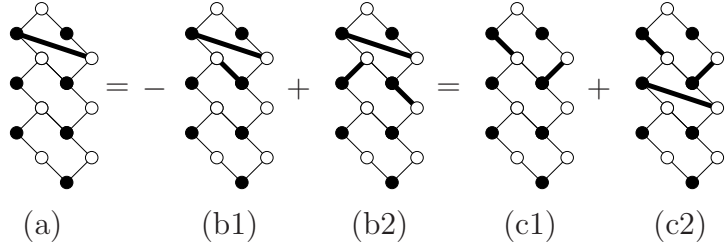}
\caption{Graphical representation of the proof of Proposition~\ref{propRecursionFiniteT}. The link in (a) is between $(\BB(x,m),\WW(x+1,m))$.}
\label{FigDimerRec}
\end{center}
\end{figure}

\begin{proof}
The left side of (\ref{eq3.3}) is represented in Figure~\ref{FigDimerRec}(a).
The scheme of Figure~\ref{FigDimerRec}(b1) is given by
\begin{equation}\label{eq3.5}
-\det\left[\begin{array}{lll}
{\cal K}_{x_i,m_i;x_j',m_j'} & {\cal K}_{x_i,m_i;x+1,m} & {\cal K}_{x_i,m_i;x,m}\\
{\cal K}_{x,m;x_j',m_j'} & {\cal K}_{x,m;x+1,m} & {\cal K}_{x,m;x,m} \\
{\cal K}_{x+1,m-1;x_j',m_j'} & {\cal K}_{x+1,m-1;x+1,m} & {\cal K}_{x+1,m-1;x,m}
\end{array}\right].
\end{equation}
The scheme of Figure~\ref{FigDimerRec}(b2) is given by
\begin{equation}
\det\left[\begin{array}{llll}
{\cal K}_{x_i,m_i;x_j',m_j'} & {\cal K}_{x_i,m_i;x+1,m} & {\cal K}_{x_i,m_i;x,m} & {\cal K}_{x_i,m_i;x+1,m-1}\\
{\cal K}_{x,m;x_j',m_j'} & {\cal K}_{x,m;x+1,m} & {\cal K}_{x,m;x,m} & {\cal K}_{x,m;x+1,m-1} \\
{\cal K}_{x,m-1;x_j',m_j'} & {\cal K}_{x,m-1;x+1,m} & {\cal K}_{x,m-1;x,m} & {\cal K}_{x,m-1;x+1,m-1} \\
{\cal K}_{x+1,m-1;x_j',m_j'} & {\cal K}_{x+1,m-1;x+1,m} & {\cal K}_{x+1,m-1;x,m} & {\cal K}_{x+1,m-1;x+1,m-1}
\end{array}\right],
\end{equation}
we subtract from the last column the previous two. Using the identity in Lemma~\ref{LemIdentity} we then obtain
\begin{equation}\label{eq3.8}
\begin{aligned}
&\det\left[\begin{array}{llll}
{\cal K}_{x_i,m_i;x_j',m_j'} & {\cal K}_{x_i,m_i;x+1,m} & {\cal K}_{x_i,m_i;x,m}& 0\\
{\cal K}_{x,m;x_j',m_j'} & {\cal K}_{x,m;x+1,m} & {\cal K}_{x,m;x,m} & 0 \\
{\cal K}_{x,m-1;x_j',m_j'} & {\cal K}_{x,m-1;x+1,m} & {\cal K}_{x,m-1;x,m} & 0 \\
{\cal K}_{x+1,m-1;x_j',m_j'} & {\cal K}_{x+1,m-1;x+1,m} & {\cal K}_{x+1,m-1;x,m} & 1
\end{array}\right]\\
&=\det\left[\begin{array}{lll}
{\cal K}_{x_i,m_i;x_j',m_j'} & {\cal K}_{x_i,m_i;x+1,m} & {\cal K}_{x_i,m_i;x,m}\\
{\cal K}_{x,m;x_j',m_j'} & {\cal K}_{x,m;x+1,m} & {\cal K}_{x,m;x,m} \\
{\cal K}_{x,m-1;x_j',m_j'} & {\cal K}_{x,m-1;x+1,m} & {\cal K}_{x,m-1;x,m}
\end{array}\right].
\end{aligned}
\end{equation}
The determinants in (\ref{eq3.5}) and (\ref{eq3.8}) differs only by the last row. Thus summing them up and using the identity in Lemma~\ref{LemIdentityB} one immediately sees that the last row and the second-last row are identical except for an extra $+1$ term in the last matrix entry. Therefore we have recovered left side of (\ref{eq3.3}). Finally one has to verify the equality between the schemes of Figure~\ref{FigDimerRec}(b1)/(b2) and Figure~\ref{FigDimerRec}(c1)/(c2). This is trivial since it corresponds to permuting the position of one column and take care of the signature of the permutation.
\end{proof}

Now we are ready to finish the proof of Theorem~\ref{ThmFiniteSystem}. The schematic representation of the proof is in Figure~\ref{FigProofFiniteT}.
\begin{figure}
\begin{center}
\psfrag{=}[c]{$=$}
\psfrag{+}[c]{$+$}
\psfrag{...}[cb]{$\ldots$}
\includegraphics[height=3cm]{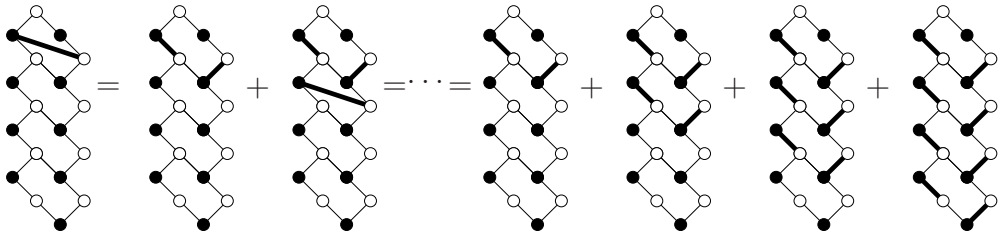}
\caption{Illustration of the recursive proof of Theorem~\ref{ThmFiniteSystem}.}
\label{FigProofFiniteT}
\end{center}
\end{figure}

Recall that $\eta(x,n)$ is the random variable of a particle located at $(x,n)$. Similarly, denote by $\sigma(x,n)$ the random variable of having a white lozenge (type III in Figure~\ref{FigureTilings})
at $(x,n)$, where the position is given by the one of the black triangle. Using Proposition~\ref{propRecursionFiniteT} repeatedly by starting with the case $m=n$ and $M=0$ and recalling the correspondence between dimers and lozenges (see Figure~\ref{FigureTilings}) we get
\begin{equation}\label{eq3.11}
{\cal K}(x,n;x+1,n)=\sum_{\ell=1}^{n} \E\left[\prod_{k=0}^{\ell-1} \eta(x,n-k)\sigma(x+1,n-k-1)\right].
\end{equation}
\begin{figure}
\begin{center}
\psfrag{forced}[l]{Forced}
\psfrag{= not particle}[l]{Only two choices}
\includegraphics[height=3cm]{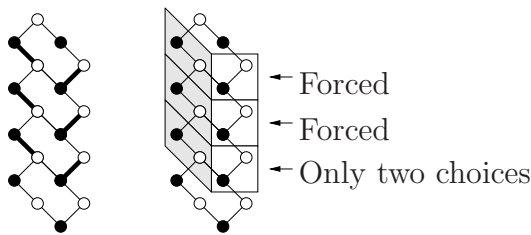}
\caption{Illustration of the connection between (\ref{eq3.11}) and (\ref{eqspeed}).}
\label{FigProofFiniteTB}
\end{center}
\end{figure}
The conditions of Theorem~\ref{ThmFiniteSystem} stipulate that the system is bounded from below by level $1$ and this is the reason why the above series is finite. Finally we have to see that (\ref{eq3.11}) and (\ref{eqspeed}) match. First consider $\ell=1,\ldots,n-1$. As illustrated in Figure~\ref{FigProofFiniteTB}, by observing that there are particles at $(x,n),\ldots,(x,n-\ell+1)$, it implies that the square lozenges at \mbox{$(x+1,n-1),\ldots,(x+1,n-\ell+1)$} occur with probability one and thus we remove them from the right side of (\ref{eq3.11}). Further, if we have a particle at $(x,n-\ell+1)$, then either there is a particle at $(x+1,n-\ell)$ or there is a white lozenge at $(x+1,n-\ell)$ since the other type of lozenge do not fit. Therefore we can replace in (\ref{eq3.11}) $\sigma(x+1,n-\ell)$ with $1-\eta(x+1,n-\ell)$. Finally, for $\ell=n$, $(x+1,0)$ is forced to be white and therefore we can remove it from the product too. This finishes the proof of Theorem~\ref{ThmFiniteSystem}.

\subsection{Proof of Theorem~\ref{ThmInfiniteSystem}}\label{Sect3.2}
The strategy is first to obtain a recursion relation analogue of Proposition~\ref{propRecursionFiniteT} for a finite honeycomb graph with generic weights, then use this result to extend the recursion relation to $\CH_L$ with $a,b,c$ weights. Using Theorem~\ref{ThmSheffield} we take the toroidal exhaustion limit and have the same recursion relation for the infinite honeycomb graph $\CH$. The proof of Theorem~\ref{ThmInfiniteSystem} ends applying recursively the recursion relation.

\subsubsection*{Recursion relation for a finite graph}
Let $G=(V,E)$ be a finite subgraph of the honeycomb graph which is tileable by dimers. $G$ is bipartite and has the same number of white and black vertices. The geometry of the honeycomb graph is as illustrated in Figure~\ref{FigCoordinates} so that the black vertices are in (a subset of) $\Z^2$ and the white vertices are in (a subset of) $(\Z+1/2)^2$. To avoid using half-integer coordinates, we adopt a notation so that the white vertices are also on $\Z^2$, more precisely, let $\mathbf{e}_1=(1/2,-1/2)$, $\mathbf{e}_2=(1/2,1/2)$, and $\mathbf{e}_3=(-1/2,1/2)$, then
\begin{equation}
\begin{aligned}
\WW(x,n)&=\BB(x,n)+\mathbf{e}_1,\\
\WW(x,n+1)&=\BB(x,n)+\mathbf{e}_2,\\
\WW(x-1,n+1)&=\BB(x,n)+\mathbf{e}_3.
\end{aligned}
\end{equation}
Edges of the graphs are of the form $\BB(x,n)$ to $\WW(x,n)$, $\WW(x,n+1)$, and \mbox{$\WW(x-1,n+1)$}. We denote by $B_G$ and $W_G$ the set of the black and white vertices respectively with the above coordinates. Assign  $\omega:E\to \R_+^*$ to be the edge weights and denote the Kasteleyn matrix, the matrix whose rows are indexed by all the white vertices and whose columns are indexed by black vertices, by
\begin{equation}
\K_G(w,b)=\left\{
\begin{array}{ll}
\omega(e),&\textrm{if }e=(w,b)\in E,\\
0,&\textrm{otherwise},
\end{array}
\right.
\end{equation}
for all $w\in W_G$ and $b\in B_G$. The above formulation defines a valid Kasteleyn orientation of the graph, that is, the number of counterclockwise edges in any face is odd; see Figure~\ref{FigKast1}.
\begin{figure}
\begin{center}
\psfrag{a}[c]{(a)}
\psfrag{b}[c]{(b)}
\includegraphics[height=2.5cm]{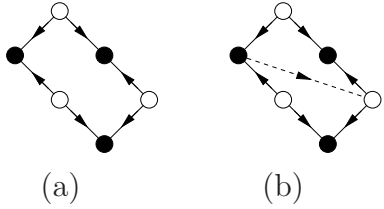}
\caption{Kasteleyn orientation of the basic honeycomb face (a) and for the face with the auxiliary edge (b).}
\label{FigKast1}
\end{center}
\end{figure}
The Kasteleyn matrix was originally introduced by Kasteleyn~\cite{Kas61} to count the number of dimer covers of a graph as the latter is given by $|\det(\K_G)|$, but its use goes beyond the uniform weight case.

The probability $\Pb$ on dimer configurations is defined as follows. For a given dimer configuration, we associate a weight to be the product of all the weights of the dimers present in that configuration. The partition function $Z_G$ is then the sum of  all weighted dimer configurations of $G$. Then, the probability of a dimer configuration is given by its weight divided by $Z_G$. In particular, given any disjoint set of edges $e_1,\ldots,e_m$, the probability of seeing dimers on the edges $e_1,\ldots,e_m$ is given by
\begin{equation} \label{eq:restrictprob}
\Pb(e_1,\ldots,e_m)=\frac{Z_{G\setminus \{e_1,\ldots,e_m\}}}{Z_G}\prod_{i=1}^m \omega(e_i).
\end{equation}
Kenyon in~\cite{Ken97} showed the following.
\begin{thm}[Kenyon~\cite{Ken97}]\label{ThmKenyon97}
Consider a set of $m$ disjoints edges of $G$, \mbox{$e_i=(w_i,b_i)\in E$}, $i=1,\ldots,m$. Then,
\begin{equation}
\Pb[e_1,\ldots,e_m]=\det\left[\K_G^{-1}(b_i,w_j)\right]_{i,j=1}^m \prod_{i=1}^m \K_G(w_i,b_i) ,
\end{equation}
where $\K_G^{-1}$ represents the inverse of $\K_G$.
\end{thm}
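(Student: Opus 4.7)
My plan is to combine Kasteleyn's partition-function formula with the definition~\eqref{eq:restrictprob} and the Jacobi identity for complementary minors of an invertible matrix. First I would note that, by Kasteleyn's theorem applied to the orientation depicted in Figure~\ref{FigKast1}, the partition function of $G$ equals $Z_G = \det \K_G$: the Kasteleyn condition (odd number of counterclockwise edges around each bounded face) guarantees that in the determinant expansion every dimer cover is counted with a common sign which can be normalized to $+1$, so $\det \K_G$ reduces to the positive sum over dimer covers weighted by the product of their edge weights.

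Second, removing the $m$ disjoint edges $e_i = (w_i,b_i)$ from $G$ amounts to deleting the $2m$ vertices they cover; on the level of matrices this removes the $m$ rows indexed by $S = \{w_1,\ldots,w_m\}$ and the $m$ columns indexed by $T = \{b_1,\ldots,b_m\}$ of $\K_G$. Since the induced orientation remains Kasteleyn on the surviving faces, the same argument yields $Z_{G\setminus\{e_1,\ldots,e_m\}} = \det \K_G^{(S,T)}$, where $\K_G^{(S,T)}$ denotes the resulting submatrix. Applying the Jacobi complementary-minor identity to $\K_G$ then gives
\begin{equation*}
\frac{\det \K_G^{(S,T)}}{\det \K_G} = \pm \det\bigl[\K_G^{-1}(b_i,w_j)\bigr]_{i,j=1}^m,
\end{equation*}
and plugging this into~\eqref{eq:restrictprob}, together with $\omega(e_i) = \K_G(w_i,b_i)$, yields the claimed formula up to an overall sign.

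The main obstacle is the sign bookkeeping: the theorem asserts the sign is $+1$, not $\pm 1$. This is precisely where the Kasteleyn orientation pays off. Because both $Z_G$ and $Z_{G\setminus\{e_1,\ldots,e_m\}}$ are honest partition functions of weighted dimer models, both determinants are nonnegative sums with the \emph{same} overall sign, and the sign produced by Jacobi's identity is exactly the combinatorial sign needed to restore this agreement after reordering the deleted rows and columns. I would verify this either by induction on $m$, treating the base case $m=1$ directly via Cramer's rule (which gives $\Pb(e) = \K_G^{-1}(b,w)\,\K_G(w,b)$ and fixes the overall sign), or by passing through the Pfaffian version of the argument, in which the Kasteleyn condition is built into the sign of the relevant Pfaffian identity. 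The inductive step uses the fact that the conditional probability of observing the remaining $m-1$ edges, given $e_1$, is the dimer probability on the graph $G\setminus e_1$, whose Kasteleyn matrix is obtained by row/column deletion and still satisfies the Kasteleyn condition.
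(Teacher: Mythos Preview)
The paper does not give its own proof of this statement; Theorem~\ref{ThmKenyon97} is quoted from~\cite{Ken97} and used as a black box, so there is nothing in the paper to compare your argument against.

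Your outline is the standard route and is essentially correct: write $Z_G$ and $Z_{G\setminus\{e_1,\dots,e_m\}}$ as Kasteleyn determinants, apply the Jacobi complementary-minor identity, and substitute into~\eqref{eq:restrictprob}. One small inaccuracy worth tightening: Kasteleyn's theorem only yields $Z_G=\lvert\det\K_G\rvert$ in general; the actual sign of $\det\K_G$ depends on the chosen ordering of the white and black vertices, and the same holds for the subgraph. So the equality $Z_G=\det\K_G$ you write down is really $\det\K_G=\epsilon_G Z_G$ with $\epsilon_G\in\{\pm1\}$, and the sign problem is precisely to show that the Jacobi sign, $\epsilon_G$, and $\epsilon_{G\setminus\{e_1,\dots,e_m\}}$ conspire to give $+1$.

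Your inductive scheme does this cleanly once the base case is made explicit. For $m=1$ with $e=(w,b)$, take any matching $M$ of $G$ containing $e$; then $M'=M\setminus\{e\}$ is a matching of $G'=G\setminus\{w,b\}$. Comparing the permutation associated to $M$ in $\det\K_G$ with that of $M'$ in $\det\K_{G'}$ shows their signs differ by exactly the cofactor sign $(-1)^{i+j}$ (this is the elementary fact behind Laplace expansion). Since all matchings of $G$ (resp.\ $G'$) carry the common sign $\epsilon_G$ (resp.\ $\epsilon_{G'}$), this gives $\epsilon_G=(-1)^{i+j}\epsilon_{G'}$, which is precisely what you need to turn Cramer's rule into $\Pb(e)=\K_G^{-1}(b,w)\,\K_G(w,b)$. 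The inductive step then proceeds exactly as you indicate, using that vertex deletion preserves the Kasteleyn property on the surviving faces. The Pfaffian detour you mention also works but is unnecessary here because the graph is bipartite.
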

In other words, the dimers form a determinantal point process with correlation kernel $\mathcal{L}=\mathcal{L}(e_i,e_j)$ given by
$\mathcal{L}(e_i,e_j)=\K_G(w_i,b_i)\K_G^{-1}(b_i,w_j)$ for $e_i=(w_i,b_i)$ and $e_j=(w_j,b_j)$. The Kasteleyn matrix approach has been used with some success for computing combinatorial and asymptotics of random tiling models; see~\cite{ACJM13,CJY14,CJ14,BBCCR15} for domino tiling models and~\cite{Pet12} for the honeycomb case.

The first result is the one that inspired Proposition~\ref{propRecursionFiniteT}. To state it, we introduce some notations. For $m\geq 1$ set
\begin{equation}
\begin{aligned}
c_1(m)=&\prod_{i=0}^{m-1} \K_G(\WW(x,n-i),\BB(x,n-1-i)) \\ &\times\K_G(\WW(x+1,n-1-i),\BB(x+1,n-1-i)),\\
c_2(m)=&\prod_{i=0}^{m-1} \K_G(\WW(x,n-i),\BB(x,n-i)) \K_G(\WW(x+1,n-i),\BB(x+1,n-1-i)),\\
c_3(m)=&\frac{c_1(m)}{c_2(m+1)}\K_G(\WW(x,n-m),\BB(x+1,n-m-1))
\end{aligned}
\end{equation}
and $c_1(0)=c_2(0)=1$.

\begin{prop}\label{PropRecursionGraph}
Assume that the set of vertices
\begin{equation}
\begin{aligned}
\Sigma_m&=\{\BB(x,n-i),\WW(x+1,n-i),0\leq i\leq m\}\\
&\cup\{\BB(x+1,n-i-1),\WW(x,n-i),0\leq i\leq m-1\}
\end{aligned}
\end{equation}
belong to the graph $G$ for $0\leq m \leq N$ and let the edges
\begin{equation}
\begin{aligned}
e_i^0&=(\BB(x,n-i),\WW(x,n-i)),\\
e_i^1&=(\BB(x+1,n-i-1),\WW(x+1,n-i)),
\end{aligned}
\end{equation}
for $0\leq i \leq N$. Then
\begin{equation}\label{eq3.15}
-\K_G^{-1}(\BB(x,n),\WW(x+1,n))=\sum_{m=0}^N c_3(m)\Pb[e_0^0,e_0^1,\ldots,e_m^0,e_m^1]+R_G(N),
\end{equation}
with
\begin{equation}
R_G(N)=c_1(N+1)\frac{Z_{G\setminus\Sigma_{N+1}}}{Z_G}.
\end{equation}
Here $Z_G$ denotes the partition function of $G$ and $Z_{G\setminus\Sigma_{N+1}}$ denotes the partition function of the graph obtained from removing $\Sigma_{N+1}$ from $G$.
\end{prop}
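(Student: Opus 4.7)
Plan: The proof should mimic the iterative decomposition behind Proposition~\ref{propRecursionFiniteT}, but now directly at the level of dimer covers of $G$, translating each combinatorial step to a statement about $\K_G^{-1}$ via Kasteleyn's theorem. The starting point is the consequence of Cramer's rule together with the identity $Z_G=|\det\K_G|$: for any $b\in B_G$ and $w\in W_G$,
\[
\K_G^{-1}(b,w)=\epsilon(b,w)\,\frac{Z_{G\setminus\{b,w\}}}{Z_G},
\]
where $\epsilon(b,w)\in\{\pm 1\}$ is determined by the chosen Kasteleyn orientation. I would first verify that for the specific pair $(b,w)=(\BB(x,n),\WW(x+1,n))$, viewed across the hexagonal face of Figure~\ref{FigKast1}(b), the orientation forces $\epsilon=-1$, so that
\[
-\K_G^{-1}(\BB(x,n),\WW(x+1,n))=\frac{Z_{G_0}}{Z_G},\qquad G_0:=G\setminus\{\BB(x,n),\WW(x+1,n)\}.
\]

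Next I would establish the one-step version (the case $N=0$) by case-analysis on the dimer covers of $G_0$. The white vertex $\WW(x,n)\in G_0$ has only two available black neighbors, namely $\BB(x,n-1)$ and $\BB(x+1,n-1)$, since its third neighbor $\BB(x,n)$ has been removed. If $\WW(x,n)$ is paired with $\BB(x,n-1)$, then $\BB(x+1,n-1)$ is left with $\WW(x+1,n-1)$ as its only remaining partner, so the two dimers on $(\WW(x,n),\BB(x,n-1))$ and $(\WW(x+1,n-1),\BB(x+1,n-1))$ are both forced; the remainder of the cover is then an arbitrary dimer cover of $G\setminus\Sigma_1$, contributing exactly $c_1(1)\,Z_{G\setminus\Sigma_1}/Z_G=R_G(0)$. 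If instead $\WW(x,n)$ is paired with $\BB(x+1,n-1)$, a direct application of \eqref{eq:restrictprob} identifies the remaining contribution with $\K_G(\WW(x,n),\BB(x+1,n-1))\cdot Z_{G\setminus\{e_0^0,e_0^1\}}/Z_G$, which by the very definition of $c_3(0)$ in terms of $c_1(0)=1$ and $c_2(1)=\K_G(\WW(x,n),\BB(x,n))\K_G(\WW(x+1,n),\BB(x+1,n-1))$ equals $c_3(0)\,\Pb[e_0^0,e_0^1]$.

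The general $N$ then follows by iteration. The residual term $R_G(0)$ has exactly the same structural form as the original quantity, with $G$ replaced by $G\setminus\Sigma_1$ and with the newly exposed white vertex $\WW(x,n-1)$ playing the role of $\WW(x,n)$: in $G\setminus\Sigma_1$, the vertex $\WW(x,n-1)$ again has only two available black neighbors, $\BB(x,n-2)$ and $\BB(x+1,n-2)$, and the same dichotomy applies. Repeating the one-step argument at levels $n-1,n-2,\ldots,n-N$ peels off the successive terms $c_3(m)\,\Pb[e_0^0,e_0^1,\ldots,e_m^0,e_m^1]$ for $m=1,\ldots,N$, while each intermediate Case~B contribution $c_1(m+1)\,Z_{G\setminus\Sigma_{m+1}}/Z_G$ continues the recursion. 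The hypothesis $\Sigma_m\subset G$ for $0\le m\le N$ is precisely what is needed so that at every step the vertices used in the case analysis actually belong to $G$, and after $N+1$ iterations the formula \eqref{eq3.15} emerges.

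The main obstacle will be the bookkeeping of Kasteleyn signs. Although each combinatorial move is transparent, verifying that the prefactor arising at step $m$ is literally $c_3(m)$ with its specific $\K_G$-ratio — and not, say, $-c_3(m)$ or $c_3(m)$ times some $m$-dependent sign — requires checking that the orientation of Figure~\ref{FigKast1} propagates consistently down the column of peeled-off hexagonal faces. Concretely, the signs implicit in the $\K_G$-values that build up $c_1$, $c_2$, $c_3$ must cancel against the orientation-dependent sign $\epsilon$ which is picked up each time one identifies a ratio of partition functions with the corresponding entry of $\K^{-1}$. Once this sign-propagation is established, the combinatorics and the algebra match term by term.
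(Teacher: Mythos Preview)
Your proposal is correct and follows essentially the same route as the paper: Cramer's rule yields $-\K_G^{-1}(\BB(x,n),\WW(x+1,n))=Z_{G\setminus\Sigma_0}/Z_G$, and the same two-case dichotomy on the partner of $\WW(x,n-m)$ drives the recursion, with the paper iterating purely at the partition-function level (their equation~(3.25)) and converting to probabilities via~\eqref{eq:restrictprob} only at the end. Your flagged ``main obstacle'' of sign propagation is in fact a non-issue here: the honeycomb Kasteleyn matrix in this paper has all non-negative entries, so $c_1,c_2,c_3$ are manifestly positive products of edge weights and the only sign in the entire argument enters once, at the initial Cramer's-rule step, which the paper handles via the auxiliary-edge device of Figure~\ref{FigKast1}(b).
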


\begin{proof}

We add an \emph{auxiliary edge} $(\BB(x,n),\WW(x+1,n))$, which is an edge not present in the graph but helpful for computations; a similar idea to that used in~\cite{CY14}. We assign a weight $1$ to the auxiliary edge $(\BB(x,n),\WW(x+1,n))$. To preserve the Kasteleyn orientation of the new graph, this edge is directed from $\BB(x,n)$ to $\WW(x+1,n)$. Since $\BB(x,n)$ and $\WW(x+1,n)$ are on the same face, removing them from the graph preserves the Kasteleyn orientation (see Figure~\ref{FigKast1}). Therefore, each matching of $G\setminus \Sigma_0$ has the same sign. Cramer's rule gives
\begin{equation} \label{eq3.16}
-\K_G^{-1}(\BB(x,n),\WW(x+1,n))=-\frac{\det[\K_{G\setminus\Sigma_0}]}{\det[\K_G]}= \frac{Z_{G\setminus \Sigma_0}}{Z_G}.
\end{equation}

For $G\setminus\Sigma_0$, $\WW(x,n)$ is either matched to $\BB(x,n-1)$ or $\BB(x+1,n-1)$. If $\WW(x,n)$ is matched to $\BB(x,n-1)$, then the edge $(\BB(x+1,n-1),\WW(x+1,n-1))$ is forced to be matched too. Notice that
\begin{equation}
\Sigma_0\cup \{\WW(x,n),\BB(x,n-1),\BB(x+1,n-1),\WW(x+1,n-1)\}=\Sigma_1.
\end{equation}
The edge weight of \mbox{$(\WW(x,n),\BB(x+1,n-1))$} is equal to \mbox{$\K_G(\WW(x,n),\BB(x+1,n-1))$} and the edge weights of $(\WW(x,n),\BB(x,n-1))$ and $(\BB(x+1,n-1),\WW(x+1,n-1))$ are equal to  $\K_G(\WW(x,n),\BB(x,n-1))$ and $\K_G(\WW(x+1,n-1),\BB(x+1,n-1))$ respectively. Notice that the product of the latter two matrix entries is exactly $c_1(1)$. Therefore,
\begin{equation}
Z_{G\setminus\Sigma_0}=
\K_G(\WW(x,n),\BB(x+1,n-1)) Z_{G\setminus(\Sigma_0\cup\{\WW(x,n),\BB(x+1,n-1)\})}
+c_1(1) Z_{G\setminus \Sigma_1}
\end{equation}
Now we proceed by induction. Consider the graph $G\setminus\Sigma_m$, which can be thought as the graph where the vertices in $\Sigma_m$ are matched as in Figure~\ref{FigKast2}(a).
\begin{figure}
\begin{center}
\psfrag{x0}[r]{$(x,n)$}
\psfrag{xm}[r]{$(x,n-m)$}
\psfrag{a}[c]{$(a)$}
\psfrag{b}[c]{$(b)$}
\psfrag{c}[c]{$(c)$}
\psfrag{d}[c]{$(d)$}
\includegraphics[height=5cm]{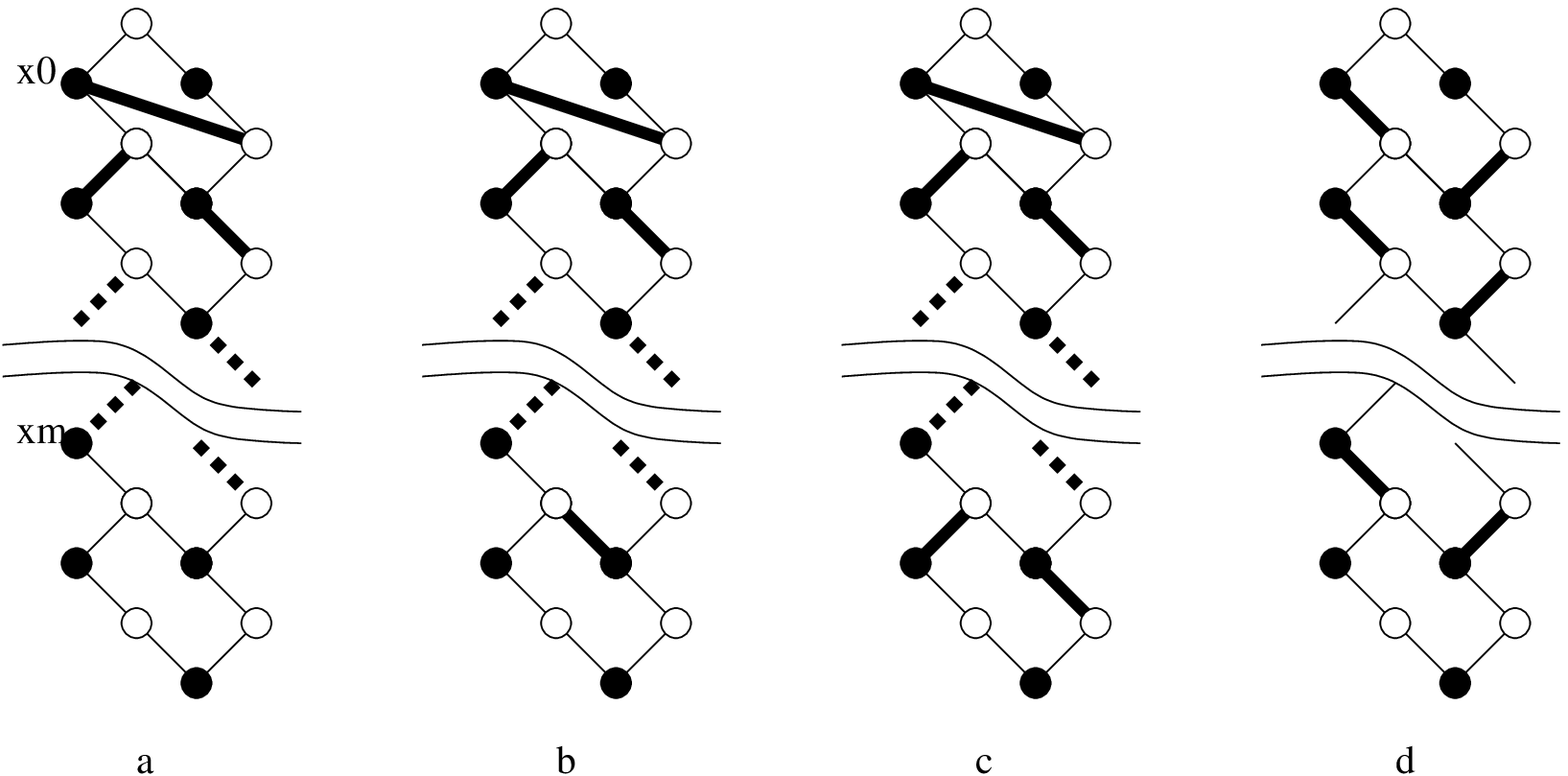}
\caption{The different graphs appearing in the proof of Proposition~\ref{PropRecursionGraph}. (a) for $\Sigma_m$, (b) for $\Sigma_m\cup\{\BB(x+1,n-m-1),\WW(x,n-m)\}$, (c) for $\Sigma_{m+1}$, and (d) for $\widetilde \Sigma_m$.}
\label{FigKast2}
\end{center}
\end{figure}
The vertex $\WW(x,n-m)$ is either matched to \mbox{$\BB(x+1,n-m-1)$} or to \mbox{$\BB(x,n-m-1)$}. For the former, then the vertex \mbox{$\BB(x+1,n-m-1)$} is incident to only one vertex, which means that also \mbox{$(\BB(x+1,n-m-1),\WW(x+1,n-m-1))$} is matched. Since
\begin{equation}
\Sigma_{m+1}=\Sigma_m\cup\{\WW(x,n-m),\BB(x,n-m-1),\BB(x+1,n-m-1),\WW(x+1,n-m-1)\},
\end{equation}
then we have
\begin{equation}\label{eq3.21}
\begin{aligned}
Z_{G\setminus \Sigma_m}=& \K_G(\WW(x,n-m),\BB(x+1,n-m-1))Z_{G\setminus (\Sigma_m\cup \{\WW(x,n-m),\BB(x+1,n-m-1)\})} \\
&+ \frac{c_1(m+1)}{c_1(m)} Z_{G\setminus \Sigma_{m+1}}.
\end{aligned}
\end{equation}
Define
\begin{equation}
\widetilde \Sigma_m=\cup_{i=0}^m\{\BB(x,n-i),\WW(x,n-i),\BB(x+1,n-i-1),\WW(x+1,n-i)\}.
\end{equation}
The vertex set $\widetilde \Sigma_m$ is equal to $\Sigma_m\cup\{\BB(x+1,n-m-1),\WW(x,n-m)\}$. Therefore (\ref{eq3.21}) reads
\begin{equation}\label{eq3.22}
Z_{G\setminus \Sigma_m}=\K_G(\WW(x,n-m),\BB(x+1,n-m-1))Z_{G\setminus \widetilde \Sigma_m} +\frac{c_1(m+1)}{c_1(m)} Z_{G\setminus \Sigma_{m+1}}.
\end{equation}
We iterate (\ref{eq3.22}) and find that
\begin{equation}\label{eq3.23}
Z_{G\setminus \Sigma_0}=\sum_{m=0}^N c_1(m)\K_G(\WW(x,n-m),\BB(x+1,n-m-1))Z_{G\setminus \widetilde \Sigma_m} +c_1(N+1) Z_{G\setminus \Sigma_{N+1}}.
\end{equation}
Now, from~\eqref{eq:restrictprob}, we have
\begin{equation}
\frac{Z_{G\setminus\widetilde\Sigma_m}}{Z_G} = \frac{\Pb[e_0^0,e_0^1,\ldots,e_m^0,e_m^1]}{c_2(m+1)},
\end{equation}
because $c_2(m+1)$ is the product of the edge weights of the edges $e_0^0,e_0^1,\ldots,e_m^0,e_m^1$. Therefore, dividing (\ref{eq3.23}) by $Z_G$, we find the right side of (\ref{eq3.15}), while the left side follows from~\eqref{eq3.16}.
\end{proof}

In order to prove Theorem~\ref{ThmInfiniteSystem} we start by considering a finite honeycomb graphs with $(a,b,c)$ weights (see Figure~\ref{FigureTilings}). As above, define $\eta(x,n)\in\{0,1\}$ to be the random variable of having a lozenge of type I at $(x,n)$. That is
\begin{equation}
\eta(x,n)=1 \Leftrightarrow \exists\textrm{ a dimer at }(\WW(x,n),\BB(x,n)).
\end{equation}
Then Proposition~\ref{PropRecursionGraph} gives the following.
\begin{cor}\label{CorFiniteabc}
For any $(a,b,c)$-weighted honeycomb graph $G$ (satisfying the assumptions of Proposition~\ref{PropRecursionGraph}),
\begin{equation}
-\frac{bc}{a}\K^{-1}_G(\BB(x,n),\WW(x+1,n))= \sum_{m=0}^N\E\bigg[(1-\eta(x+1,n-m))\prod_{i=0}^m \eta(x,n-i)\bigg] + \widetilde R_G(N)
\end{equation}
with $\widetilde R_G(N)=a^{-1}(bc)^{N+2} Z_{G\setminus\Sigma_{N+1}}/Z_G$.
\end{cor}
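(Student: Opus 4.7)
The plan is to specialize Proposition~\ref{PropRecursionGraph} to the $(a,b,c)$-weighted honeycomb graph by evaluating the combinatorial constants $c_1,c_2,c_3$ and the remainder $R_G$, and then to reinterpret each joint dimer probability as a particle expectation, using exactly the forcing argument that appears at the end of the proof of Theorem~\ref{ThmFiniteSystem}.

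First I would identify the three edge directions with the three lozenge types of Figure~\ref{FigureTilings}: every $\mathbf{e}_1$-edge (type I) carries weight $b$, every $\mathbf{e}_2$-edge (type III) carries weight $c$, and every $\mathbf{e}_3$-edge (type II) carries weight $a$. A direct computation from the definitions of $c_1,c_2,c_3$ then yields
\[
c_1(m)=c_2(m)=(bc)^m,\qquad \K_G(\WW(x,n-m),\BB(x+1,n-m-1))=a,\qquad c_3(m)=\frac{a}{bc}.
\]
Multiplying (\ref{eq3.15}) through by $bc/a$ absorbs the $c_3$-prefactor into $1$ and converts $R_G(N)=(bc)^{N+1}Z_{G\setminus\Sigma_{N+1}}/Z_G$ into $\widetilde R_G(N)=a^{-1}(bc)^{N+2}Z_{G\setminus\Sigma_{N+1}}/Z_G$, matching the statement of the corollary.

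It then remains to identify $\Pb[e_0^0,e_0^1,\ldots,e_m^0,e_m^1]$ with $\E\bigl[(1-\eta(x+1,n-m))\prod_{i=0}^{m}\eta(x,n-i)\bigr]$. The edge $e_i^0$ being a dimer is the event $\{\eta(x,n-i)=1\}$ (type I at $(x,n-i)$), while the edge $e_i^1$ being a dimer is the event $\{\sigma(x+1,n-i-1)=1\}$ (type III with black corner at $(x+1,n-i-1)$), with $\sigma$ the type III indicator introduced in Section~3.1. Applying the forcing reasoning of Figure~\ref{FigProofFiniteTB}, conditioning on the column of type I dimers at $(x,n-i)$ for $i=0,\ldots,m$ consumes each white vertex $\WW(x,n-i)$ and thereby forces all intermediate $\sigma$-factors in the product to equal $1$, while at the terminal black vertex of the column only two orientations (type I and type III) remain compatible with the already-placed dimers, so that the last $\sigma$-factor can be replaced by $(1-\eta(x+1,\cdot))$.

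The principal obstacle is this final forcing step: one must carefully account for which white neighbors have been consumed by the left column of type~I dimers in order to justify that every intermediate type~III site occurs with conditional probability one, and that at the terminal black vertex the $\mathbf{e}_3$-option is blocked while the $\mathbf{e}_1$ and $\mathbf{e}_2$-options remain, yielding the $(1-\eta(x+1,n-m))$ factor in the statement.
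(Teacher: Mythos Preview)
Your approach is exactly the paper's: specialize the constants in Proposition~\ref{PropRecursionGraph} and then reduce the joint dimer probability to the particle expectation by a forcing argument. The computation $c_1(m)=c_2(m)=(bc)^m$, $c_3(m)=a/(bc)$, and the resulting $\widetilde R_G(N)$ are all correct and identical to what the paper does.

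The one genuine slip is in the direction of the forcing. You write that ``conditioning on the column of type~I dimers \ldots\ forces all intermediate $\sigma$-factors to equal $1$''. This is not true: after the left column consumes $\WW(x,n-i)$, each black vertex $\BB(x+1,n-i-1)$ still has \emph{two} available white neighbours ($\WW(x+1,n-i-1)$ and $\WW(x+1,n-i)$), so neither the intermediate type~III dimers nor the intermediate type~I dimers on the right are determined. What does work --- and what the paper actually uses in (\ref{eq3.30}) --- is the opposite order: first impose the terminal condition (the edge $e_m^1$, equivalently the terminal $1-\eta$ factor once $\WW(x,n-m)$ is consumed by $e_m^0$), and then run an induction \emph{upwards}. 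With $e_m^1$ in place, $\WW(x+1,n-m)$ is occupied, so $\BB(x+1,n-m)$ (whose $\mathbf{e}_3$-neighbour $\WW(x,n-m+1)$ is already consumed by $e_{m-1}^0$) is forced onto $\WW(x+1,n-m+1)$, giving $e_{m-1}^1$; repeating this yields $e_{m-2}^1,\ldots,e_0^1$. In short, the redundancy of the intermediate $\sigma$-factors comes from the column \emph{together with} the last $\sigma$-factor, not from the column alone. Once you reverse the order of the two steps in your last paragraph, the argument goes through and coincides with the paper's.
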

\begin{proof}
The $(a,b,c)$-weighting means that
\begin{equation}
\begin{aligned}
\K_G(\WW(x,n),\BB(x+1,n-1))&=a,\\
\K_G(\WW(x,n),\BB(x,n))&=b,\\
\K_G(\WW(x,n),\BB(x,n-1))&=c,
\end{aligned}
\end{equation}
see Figure~\ref{FigureTilings}. Then $c_1(m)=(bc)^m$ as well as $c_2(m)=(bc)^m$ so that Proposition~\ref{PropRecursionGraph} gives immediately
\begin{equation}
-\K^{-1}_G(\BB(x,n),\WW(x+1,n))= \sum_{m=0}^N\frac{a}{bc}\Pb[e_0^0,e_0^1,\ldots,e_m^0,e_m^1] + R_G(N)
\end{equation}
with $R_G(N)=(bc)^{N+1} Z_{G\setminus\Sigma_{N+1}}/Z_G$.

Next notice that having a chain of particles $\{\eta(x,n-i)\}_{i=0}^m$ which corresponds to the edges $e_0^0,\ldots,e_m^0$ and no particle at $(x+1,n-m)$ corresponding to the edge $(\BB(x+1,n-m-1),\WW(x+1,n-m))$, then the dimer configurations incident to the edge $\BB(x+1,n-m+i)$ for $i=0,\ldots,m-1$ are forced since each vertex becomes incident to one edge by increasing $i$. Hence the dimers \mbox{$(\BB(x+1,n-i-1),\WW(x+1,n-i))$} for $i=0,\ldots,m$ are forced. This implies
\begin{equation}\label{eq3.30}
\Pb[e_0^0,e_0^1,\ldots,e_m^0,e_m^1]=\Pb[e_0^0,\ldots,e_m^0,e_m^1]=\E\bigg[(1-\eta(x+1,n-m))\prod_{i=0}^m \eta(x,n-i)\bigg].
\end{equation}
This ends the proof of the corollary.
\end{proof}

\subsubsection*{Recursion relation on the infinite honeycomb graph $\CH$}
We focus on the $(a,b,c)$-weighting of the honeycomb graph. Recall that $\CH$ denotes the infinite honeycomb graph. The first result is the extension of the recursion relation to the infinite honeycomb graph.
\begin{prop}\label{PropRecursionInfinite}
On the infinite honeycomb graph with $(a,b,c)$-weights it holds
\begin{equation}\label{eq3.40}
-\frac{bc}{a}\K^{-1}_{abc}(\BB(x,n),\WW(x+1,n))= \sum_{m=0}^N\E\left[(1-\eta(x+1,n-m))\prod_{k=0}^m\eta(x,n-k)\right]+R_N
\end{equation}
with
\begin{equation}\label{eq3.32b}
0\leq R_N\leq C \E\left[(1-\eta(x+1,n-N))\prod_{k=0}^N\eta(x,n-k)\right]
\end{equation}
for some finite constant $C$.
\end{prop}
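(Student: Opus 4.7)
My plan is to derive~\eqref{eq3.40}--\eqref{eq3.32b} by passing to the toroidal exhaustion limit in Corollary~\ref{CorFiniteabc} applied to $\CH_L$ with $(a,b,c)$ weights chosen so that, by Theorem~\ref{ThmSheffield}, the uniform measure on dimer coverings of slope $(p_a,p_b,p_c)$ converges to $M_\nu$. Specialised to $G=\CH_L$, the corollary reads
\begin{equation*}
-\frac{bc}{a}\,\K_{\CH_L}^{-1}\bigl(\BB(x,n),\WW(x+1,n)\bigr)=\sum_{m=0}^N \E_L\!\left[(1-\eta(x+1,n-m))\prod_{k=0}^m\eta(x,n-k)\right]+\widetilde R_{\CH_L}(N),
\end{equation*}
with the non-negative remainder $\widetilde R_{\CH_L}(N)=a^{-1}(bc)^{N+2}Z_{\CH_L\setminus\Sigma_{N+1}}/Z_{\CH_L}$.

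The first step is to pass to the limit $L\to\infty$ termwise. The left-hand side converges to $-(bc/a)\,\K^{-1}_{abc}(\BB(x,n),\WW(x+1,n))$ by the toroidal-exhaustion convergence of the inverse Kasteleyn matrix from~\cite{KOS03} (the very input that produces~\eqref{eq2.13}); each summand on the right, being by Theorem~\ref{ThmKenyon97} a determinant of a fixed number of inverse Kasteleyn entries, converges to the corresponding $M_\nu$-expectation. Consequently $\widetilde R_{\CH_L}(N)\to R_N$ exists, the identity~\eqref{eq3.40} holds for this $R_N$, and $R_N\ge 0$ is inherited from $\widetilde R_{\CH_L}(N)\ge 0$.

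For the upper bound~\eqref{eq3.32b}, the key structural observation is the set identity $\Sigma_{N+1}=\widetilde\Sigma_N\cup\{\BB(x,n-N-1),\WW(x+1,n-N-1)\}$. Combining it with the forcing identity $(bc)^{N+1}Z_{\CH_L\setminus\widetilde\Sigma_N}/Z_{\CH_L}=\E_L\!\left[(1-\eta(x+1,n-N))\prod_{k=0}^N \eta(x,n-k)\right]$ already exploited in the derivation of~\eqref{eq3.30}, one rewrites the finite remainder as
\begin{equation*}
\widetilde R_{\CH_L}(N)=\frac{bc}{a}\,\E_L\!\left[(1-\eta(x+1,n-N))\prod_{k=0}^{N}\eta(x,n-k)\right]\cdot\frac{Z_{\CH_L\setminus\Sigma_{N+1}}}{Z_{\CH_L\setminus\widetilde\Sigma_N}},
\end{equation*}
and the Cramer-type identity of~\eqref{eq3.16}, applied to the slit graph $\CH_L\setminus\widetilde\Sigma_N$, identifies the last ratio (up to sign) with the inverse Kasteleyn entry of $\CH_L\setminus\widetilde\Sigma_N$ between the two vertices $\BB(x,n-N-1)$ and $\WW(x+1,n-N-1)$ at the tip of the slit. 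The main obstacle is the uniform bound on this slit-graph entry in $L$ and $N$: after sending $L\to\infty$ this reduces to controlling the inverse Kasteleyn of the infinite slit graph $\CH\setminus\widetilde\Sigma_N$ at its tip, which I expect to stabilise as $N$ grows because the local environment at the tip becomes asymptotically translation-invariant. A softer alternative is to combine the telescoping identity $R_{N-1}-R_N=\E[(1-\eta(x+1,n-N))\prod_{k=0}^N\eta(x,n-k)]$ (obtained by subtracting two consecutive instances of~\eqref{eq3.40}, which in particular shows that $R_N$ is non-increasing) with the exponential decay of chain probabilities from~\cite{Ton15}, reducing~\eqref{eq3.32b} to a single determinantal estimate that, conditional on a chain of particles at $(x,n),\dots,(x,n-N)$, the probability of a particle at $(x+1,n-N)$ stays bounded away from $1$.
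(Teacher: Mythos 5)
There are two genuine gaps here. First, your entire first step applies Corollary~\ref{CorFiniteabc} to the torus $\CH_L$, but that corollary (through Proposition~\ref{PropRecursionGraph}, the Cramer identity~\eqref{eq3.16} and Theorem~\ref{ThmKenyon97}) is established only for finite \emph{planar} subgraphs of the honeycomb lattice, where $Z_G=|\det\K_G|$ and all matchings carry the same sign. On the torus the partition function is a signed combination of four determinants corresponding to the four periodicity classes, so neither $Z_{\CH_L}=|\det\K_{\CH_L}|$ nor the single-determinant form of Kenyon's local statistics holds verbatim, and the identity $Z_{\CH_L\setminus\Sigma_0}/Z_{\CH_L}=-\K^{-1}_{\CH_L}(\BB(x,n),\WW(x+1,n))$ acquires uncontrolled signs. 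This is precisely the trap the authors flag: their own first attempt went through the torus and ``did not take fully into account the signs associated with the torus.'' The published proof avoids it by taking $G$ to be the boxed plane partition of size $n$ (a planar region) and invoking Petrov's bulk convergence of the inverse Kasteleyn matrix (Proposition~7.10 of~\cite{Pet12}), which realizes every slope $\nu$ in the bulk; the torus route is fixable but requires extra work you have not supplied.

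Second, and more importantly, the bound~\eqref{eq3.32b} is exactly the point where your argument stops being a proof. Your rewriting of $\widetilde R_G(N)$ as the chain probability times $Z_{G\setminus\Sigma_{N+1}}/Z_{G\setminus\widetilde\Sigma_N}$ is correct (and is the same reduction the paper uses), but you then only say you ``expect'' the corresponding slit-graph inverse-Kasteleyn entry to stabilise; no uniform bound in $N$ and in the domain size is given. The paper closes this at the level of the finite planar graph with a sign argument: removing $\Sigma_{N+1}$ versus $\widetilde\Sigma_N$ leaves graphs whose faces carry the same Kasteleyn orientation, so the determinant expansions have matching (same-sign) terms and the ratio of partition functions is bounded by a constant, yielding $0\le\widetilde R_G(N)\le C\,\Pb[e_0^0,e_0^1,\dots,e_N^0,e_N^1]$ \emph{before} any limit is taken. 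Your ``softer alternative'' does not close the gap either: the telescoping identity $R_{N-1}-R_N=\E[(1-\eta(x+1,n-N))\prod_{k=0}^N\eta(x,n-k)]$ only gives $R_N=\sum_{m>N}\E[\cdots]$ once you already know $R_N\to0$, and to dominate that tail by $C$ times the $m=N$ term you need a uniform conditional bound $\Pb[\eta(x+1,n-N)=1\mid\text{chain}]\le 1-\delta$ together with a geometric comparison of successive terms --- which is essentially equivalent to the estimate you are trying to prove, and the exponential decay quoted from~\cite{Ton15} does not substitute for it.
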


\begin{proof}
The first step is to provide bounds for $\tilde{R}_G(N)$ given in Corollary~\ref{CorFiniteabc}.  We have the bound $0\leq Z_{G\backslash \Sigma_{N+1}} /  Z_{G \backslash \tilde{\Sigma}_N}\leq C$. The lower bound follows because all terms are positive. The upper bound follows because the Kasteleyn orientation on the faces of the graph $G \backslash \Sigma_{N+1}$ is the same as on the faces of the graph $G\backslash \tilde{\Sigma}_N$, and so the expansions of the determinants of the corresponding Kasteleyn matrices will have matching terms (up to a constant prefactor).  We conclude that
\begin{equation}\label{boundforfinitehex}
0 \leq \tilde{R}_G(N) \leq C \Pb[e_0^0,e_0^1,\dots ,e_N^0,e_N^1].
\end{equation}

To extend the graph to the infinite plane, we set $G$ to be the box plane partition
of size $n$, whose vertices are given by
$$\{ \bullet(x_1,x_2):  -n \leq x_1 \leq n-1,-1 \leq x_2 \leq 2n-2,-1-x_1 \leq x_2 \leq 2n-2-x_1 \}$$
and
$$ \{ \circ(x_1,x_2): 1-n\leq x_1 \leq n, 0 \leq x_2 \leq 2n-1,-x_1 \leq x_2 \leq 2n-1-x_1 \};$$
see  Fig.~\ref{fig:honeycomb} for the boxed plane partition of size $3$.
\begin{figure}
\begin{center}
\includegraphics[height=6cm]{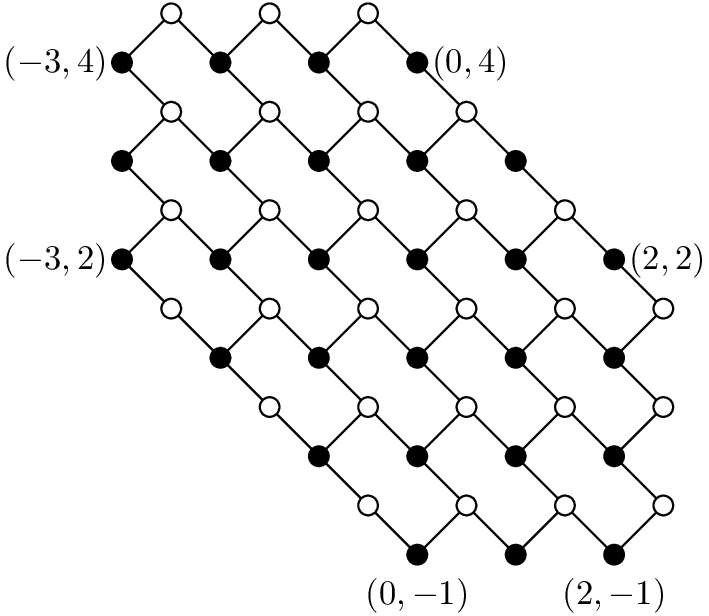}
\caption{The figure shows a boxed plane partition of size 3 with the coordinates of some black vertices. }
\label{fig:honeycomb}
\end{center}
\end{figure}

Uniformly random tilings of large box plane partitions exhibit a limit shape phenomena~\cite{CLP98}. Proposition 7.10 in \cite{Pet12} gives the bulk limit convergence of the inverse Kasteleyn matrix of the boxed plane partition of size $n$ to the infinite inverse Kasteleyn matrix with all slopes being realized. This gives the convergence of the finite-dimensional distributions as stated in Theorem 2 in \cite{Pet12}.
The above coordinates for white and black vertices agree with Petrov's~\cite{Pet12} (by setting $k=2$ and $N=2n$ in his paper). Note that Petrov's results hold for more general regions than the boxed plane partition but we do not need this level of generality here.  By applying Petrov's results to Corollary~\ref{CorFiniteabc} and~\eqref{boundforfinitehex} gives the result.

\end{proof}
\begin{rem}
The previous version of the proof of this result required embedding the graph on the torus. Unfortunately, the previous proof contained a mistake as it did not take fully into account the signs associated with the torus. Although this can be fixed, the version of the proof presented below is far simpler and avoids such problems.
\end{rem}

\subsubsection*{End of the proof of Theorem~\ref{ThmInfiniteSystem}}
With the result of Proposition~\ref{PropRecursionInfinite} we can now easily end the proof of Theorem~\ref{ThmInfiniteSystem}.

Notice that the left side of (\ref{eq3.40}) does not depend on $N$ and is finite, while the right side of (\ref{eq3.40}) is a sum of positive numbers. Therefore the last terms in the sum with $m=N$ goes to zero faster than $1/N$, which in turns implies that $R_N\to 0$ as $N\to\infty$ (in reality, the decay is exponential as shown in~\cite{Ton15}). Therefore, by taking $N\to\infty$ in (\ref{eq3.40}) we obtain
\begin{equation}
-\frac{bc}{a}\K^{-1}_{abc}(\BB(x,n),\WW(x+1,n))= \sum_{m=0}^\infty\E\left[(1-\eta(x+1,n-m))\prod_{k=0}^m\eta(x,n-k)\right].
\end{equation}
Finally notice that the prefactor in (\ref{eq2.12}) for $x'=x+1$ and $n'=n$ is exactly $bc/a$, i.e.,
\begin{equation}
\frac{bc}{a}\K^{-1}_{abc}(\BB(x,n),\WW(x+1,n)) = \K^{-1}_\nu(x,n;x+1,n).
\end{equation}
The other equalities are easy to compute and were already obtained in~\cite{BF08}.

\begin{rem}
It seems plausible that Theorem~\ref{ThmInfiniteSystem} could also be verified using Proposition~\ref{propRecursionFiniteT} recursively for the kernel $\K^{-1}_\nu(x,n;x',n')$ giving  an algebraic proof. However, a bound for the remainder, analogous to (\ref{eq3.32b}), would still be required.  This bound seems mysterious from the algebraic expression and even the positivity of the remainder is not obvious from the determinant expression.
\end{rem}

\appendix

\section{Equivalence of kernels}
Here we derive a single integral representation for the inverse Kasteleyn matrix $\K^{-1}_\nu(x,n;x',n')$ given in (\ref{eq2.14}). Let us do the change of variables $w\to -w$ so that
\begin{equation}
\K^{-1}_\nu(x,n;x',n')=\frac{(-1)^{n-n'+x-x'}}{(2\pi\I)^2}\oint_{|z|=b/a}dz\oint_{|w|=c/a}dw \frac{z^{n-n'} w^{n'-n+x'-x-1}}{1+z-w}.
\end{equation}
\emph{Case $n\geq n'$}. In this case when $w$ is not in the arc of the circle of radius $c/a$ (anticlockwise oriented) from $\overline\Omega_{abc}$ to $\Omega_{abc}$, then no poles of $z$ lies inside its integration contour and the contribution is $0$. If $w$ is in the arc of circle from $\overline\Omega_{abc}$ to $\Omega_{abc}$, then there is a simple pole at $z=w-1$, which leads to
\begin{equation}\label{eqApp1}
\K^{-1}_\nu(x,n;x',n')=\frac{(-1)^{n-n'+x-x'}}{2\pi\I}\int_{\overline\Omega_{abc}}^{\Omega_{abc}} dw \frac{(w-1)^{n-n'}}{w^{n-n'+x-x'+1}}.
\end{equation}
Here the integration path can be any path crossing the real axis on $\R_+$.

\emph{Case $n<n'$}. In this case when $w$ is in the arc of the circle of radius $c/a$ (anticlockwise oriented) from $\overline\Omega_{abc}$ to $\Omega_{abc}$, then outside the integration contour of $z$ there are no poles and the integral over $z$ gives $0$. If $w$ is in the arc of circle from $\Omega_{abc}$ to $\overline\Omega_{abc}$, then there is a simple pole at $z=w-1$, which leads to
\begin{equation}
\K^{-1}_\nu(x,n;x',n')=-\frac{(-1)^{n-n'+x-x'}}{2\pi\I}\int_{\Omega_{abc}}^{\overline\Omega_{abc}} dw \frac{(w-1)^{n-n'}}{w^{n-n'+x-x'+1}}.
\end{equation}
This is equal to the expression (\ref{eqApp1}) if the paths are chosen to cross real axis on $\R_-$.

%\bibliographystyle{patplain}
%\bibliography{Biblio}

\end{document}